\theoremstyle{definition}
\newtheorem{definition}{Definition}
\numberwithin{definition}{section} 
\newtheorem*{fact}{Fact}
\newtheorem{theorem}{Theorem}
\numberwithin{theorem}{section} 
\newtheorem{lemma}[theorem]{Lemma}
\newtheorem{corollary}[theorem]{Corollary}
\title{Regulation of Algorithmic Collusion}
\author{Jason D. Hartline}
\author{Sheng Long}
\author{Chenhao Zhang}
\email{{hartline,shenglong,chenhao.zhang.rea}@u.northwestern.edu}
\affiliation{%
  \institution{Northwestern University}
  \city{Evanston}
  \country{USA}
}
\newcommand{\setsize}[1]{{\left|#1\right|}}
\DeclareMathOperator{\argmax}{argmax}
\DeclareMathOperator{\argmin}{argmin}
\newcommand{\given}{\,\mid\,}
\newcommand{\prob}[2][]{\text{\bf Pr}\ifthenelse{\not\equal{}{#1}}{_{#1}}{}\!\left[{\def\givenn{\middle|}#2}\right]}
\newcommand{\expect}[2][]{\text{\bf E}\ifthenelse{\not\equal{}{#1}}{_{#1}}{}\!\left[{\def\givenn{\middle|}#2}\right]}
\newcommand{\tparen}{\big}
\newcommand{\tprob}[2][]{\text{\bf Pr}\ifthenelse{\not\equal{}{#1}}{_{#1}}{}\tparen[{\def\given{\tparen|}#2}\tparen]}
\newcommand{\texpect}[2][]{\text{\bf E}\ifthenelse{\not\equal{}{#1}}{_{#1}}{}\tparen[{\def\given{\tparen|}#2}\tparen]}
\newcommand{\sprob}[2][]{\text{\bf Pr}\ifthenelse{\not\equal{}{#1}}{_{#1}}{}[#2]}
\newcommand{\sexpect}[2][]{\text{\bf E}\ifthenelse{\not\equal{}{#1}}{_{#1}}{}[#2]}
\newcommand{\priceset}{\mathcal{P}}
\newcommand{\regret}{R}
\newcommand{\ExpRegret}{ER} 
\newcommand{\jointpricedist}{\Pi}
\newcommand{\distover}[1]{\Delta(#1)}
\newcommand{\EExp}[2]{
\mathbb{E}_{#1}[#2]
}
\newcommand{\Er}[2]{ER^T(#1,#2)}
\newcommand{\EstEr}[2]{\widetilde{ER}^T(#1,#2)}
\newcommand{\EstNR}{\widetilde{NR}}
\newcommand{\ubar}[1]{\underaccent{\bar}{#1}}
\newcommand{\costrange}{[\ubar{c},\bar{c}]}
\newcommand{\estplcost}{\tilde{c}}
\newcommand{\plcost}{c_{\ast}}
\newcommand{\pmax}{\overline{p}}
\newcommand{\pimin}{\underline{\pi}}
\newcommand{\Rthresh}{\overline{r}}
\newcommand{\Tthresh}{\overline{T}}
\newcommand{\TrueEr}{R^T}
\newcommand{\PlEr}{R^T_{\ast}}
\newcommand{\EstPlEr}{\widetilde{R}^T}
\newcommand{\EstUCB}{\operatorname{UCB}^T}
\begin{document}
\acmYear{2024}\copyrightyear{2024}
\acmConference[CSLAW '24]{Symposium on Computer Science and Law}{March 12--13, 2024}{Boston, MA, USA}
\acmBooktitle{Symposium on Computer Science and Law (CSLAW '24), March 12--13, 2024, Boston, MA, USA}
\acmDOI{10.1145/3614407.3643706}
\acmISBN{979-8-4007-0333-1/24/03}

\begin{CCSXML}
<ccs2012>
   <concept>
       <concept_id>10010405.10010455.10010458</concept_id>
       <concept_desc>Applied computing~Law</concept_desc>
       <concept_significance>500</concept_significance>
       </concept>
   <concept>
       <concept_id>10003752.10010070.10010099</concept_id>
       <concept_desc>Theory of computation~Algorithmic game theory and mechanism design</concept_desc>
       <concept_significance>500</concept_significance>
       </concept>
   <concept>
       <concept_id>10003752.10010070.10010071.10010079</concept_id>
       <concept_desc>Theory of computation~Online learning theory</concept_desc>
       <concept_significance>500</concept_significance>
       </concept>
 </ccs2012>
\end{CCSXML}


\begin{abstract}
First version: January, 2024\\
This version: August, 2024
\vspace{1em}

\noindent Consider sellers in a competitive market that use algorithms to adapt their prices from data that they collect.  In such a context it is plausible that algorithms could arrive at prices that are higher than the competitive prices and this may benefit sellers at the expense of consumers (i.e., the buyers in the market).  This paper gives a definition of plausible algorithmic non-collusion for pricing algorithms.  The definition allows a regulator to empirically audit algorithms by applying a statistical test to the data that they collect.  Algorithms that are good, i.e., approximately optimize prices to market conditions, can be augmented to collect the data sufficient to pass the audit.  Algorithms that have colluded on, e.g., supra-competitive prices cannot pass the audit. The definition allows sellers to possess useful side information that may be correlated with supply and demand and could affect the prices used by good algorithms. The paper provides an analysis of the statistical complexity of such an audit, i.e., how much data is sufficient for the test of non-collusion to be accurate.
\end{abstract}
\maketitle

\section{Introduction}

Algorithms are increasingly being used to price goods and services in competitive markets.  Several recent papers have shown that in certain settings, some configurations of certain pricing algorithms can find and maintain supra-competitive prices when in competition with each other \citep{calvano2020artificial, banchio2022artificial,asker2023impact, assad2020algorithmic}. As a result, (a) regulators may be concerned about how the risk of algorithmic collusion can be mitigated and the consistency of this regulation with legal standards for collusion, (b) individual sellers may be interested in algorithms that provably do not collude, and (c) third-party platforms. like AirBnB and eBay, may want to only recommend pricing algorithms to their sellers that will not risk incriminating the platforms themselves for price fixing \citep{harrington2022effect}.
Several papers have proposed 
 ways to change competition and antitrust law in response to the potential risks brought by algorithmic collusion \citep{beneke2019artificial, gal2023limiting, harrington2018developing}. This paper complements these proposals with a method for regulating algorithmic collusion from data.

For individual adoption of pricing algorithms, our test for algorithmic collusion parallels the role that overt communication plays in the modern legal theory of (non-algorithmic) collusion.
Under current, if controversial, understandings of American antitrust law,  an \textit{express} agreement (in the form of overt communication) is a prerequisite to establishing liability under the Sherman Act.\footnote{The act of colluding without an express agreement is known as ``tacit collusion''. The Supreme Court defined tacit collusion as ``the process, \textit{not in itself unlawful}, by which firms in a concentrated market might in effect share monopoly power, setting their prices at a profit-maximizing, supra-competitive level by recognizing their shared economic interests and their interdependence with respect to price and output decisions'' \citep{1993brooke}.} 
Courts cannot read the minds of the sellers to understand their pricing strategies and therefore prefer to rely on evidence of overt illegal coordination \cite{harrington2018developing}. Similarly, in the setting of algorithmic collusion, we might not know some of the fundamentals that guide a seller's pricing algorithm (e.g., seller's costs or information about the demand), but we can determine whether an outcome is competitive for some fundamentals. We refer to such outcomes as \textit{plausible non-collusion}. What is left out is outcomes that are non-competitive for any fundamentals, i.e., regardless of what is in the minds of the sellers.  We argue subsequently that there is no loss using our test for algorithmic collusion to forbid algorithms that obtain such non-competitive outcomes.

For third-party platforms of online marketplaces or algorithm vendors who are recommending or selling pricing algorithms to sellers in a market, the coordination of a third party on an algorithm that is known to obtain supra-competitive prices is illegal by current standards \citep{harrington2022effect}.  Our test for algorithmic collusion identifies such problematic algorithms.

The possibility of algorithmic collusion creates greater risk of supra-competitive prices.  Thus, regulators and lawmakers may desire methods for establishing algorithmic collusion beyond the legal standards of non-algorithmic collusion.    \citet{harrington2018developing} noted that algorithms afford introspection that non-algorithmic human agents do not afford.  He discussed regulating algorithmic collusion by prohibiting pricing algorithms with certain properties and proposed a few ``inside the head'' approaches to check if an algorithm has any of these properties. One approach is for the regulator to check the source code of the algorithms. This approach has several drawbacks. On one hand, this approach seems to require costly and detailed scrutiny by experts \citep{krollAccountableAlgorithms2017} and has the potential to leak the intellectual property of the algorithm developers \citep{RuckelshausMonsantoCo1984}. On the other hand, the source code of popular black box algorithms such as those based on deep neural networks gives little information about its behavior.

Another approach \citet{harrington2018developing} discussed is to conduct dynamic testing on the pricing algorithms, i.e., running the algorithms by feeding them with simulated inputs to observe their behavior. Although dynamic testing is generally considered an effective approach for detecting software bugs, there are still challenges when applied to understanding the behaviors of pricing algorithms. To make better pricing decisions in vibrant market environments, the inputs to pricing algorithms are usually large in dimensions and dynamic. It is infeasible to exhaust all or even a small portion of the possible inputs the algorithms could take. Further, the inputs the algorithms receive from the environments in which they are deployed could be very different from the simulated ones the regulators could expect. Even worse, the not-so-recent Volkswagen emissions scandal and the emergence of the field of non-adversarially robust machine learning \citep{carlini2017adversarial,biggio2013evasion} showed that inputs, where collusion happens, could be adversarially hidden to evade scrutiny. As the early pioneer of computer science Edsger W. Dijkstra noted regarding testing for bug-finding, ``The first moral of the story is that program testing can be used very effectively to show the presence of bugs but never to show their absence'' \citep{dijkstra1970reliability}. Dynamic testing can be largely uninformative for the behavior of pricing algorithms on inputs that are not tested during simulation but show up during the algorithms' actual deployment. Similar points for dynamic testing have also been made by \citet{desai2017trust}. 

This paper takes a different approach.  It identifies an empirical condition that can be checked from data logged by the algorithm while deployed to prove statistically that the algorithm is not colluding under reasonable assumptions. It provides a way to augment any ``good'' algorithm to collect this data without significantly harming its performance. ``Bad'' algorithms that collude on implausibly competitive prices cannot be augmented to collect data and pass the audit. 
Our framework enables algorithms to prove that they are plausibly non-collusive and opens the opportunity for new legal standards for enforcing non-collusion, namely, requiring the algorithms used to continually pass such a test.

The paper develops an empirical definition of plausible non-collusion that has the following two groups of properties: 
\begin{itemize}
    \item Economic properties:
    \begin{itemize}
        \item (unilateral) non-collusion is a unilateral property that an algorithm can satisfy independently of what other algorithms are doing.
        \item (information compatible) it allows the sellers to use side information that may be correlated.
        \item (optimal)  optimizing is not collusion.
    \end{itemize}
    \item Legal properties:
    \begin{itemize}
        \item (plausibly correct) algorithms that collude on supra-competitive price inconsistent with plausible preferences and beliefs of sellers can not satisfy it.
        \item (minimum burden of compliance) There are known good algorithms, i.e., those that do not use suboptimal prices, that satisfy the definition. Any new good algorithm can be augmented to collect the necessary data to satisfy the definition with little performance loss. 
    \end{itemize}
\end{itemize}

 Justification of economic properties is as follows. It is critical for a definition of non-collusion to have the property of being unilateral. A seller should always be able to adopt a pricing strategy that is non-collusive, regardless of what other sellers do. It may surprise the reader that our definition of non-collusion allows correlation of the the sellers' behavior. This correlation of behavior is required to handle side information that could be correlated. Consider the following example, the demand for hotels in the business district is higher during the week and the day-of-week is known to all sellers. Setting different prices in response to known differences in demand is not collusion. Last but not least, if the seller is optimizing given the information they obtained, then this act of optimization is not collusion. 
We argue that if algorithms satisfy the three economic properties, then they are not colluding. On the other hand, if algorithms do not satisfy these three properties, then something undesirable is happening that regulators may wish to rule out.

The legal properties our definition satisfies make it appropriate for regulators to require it for pricing algorithms. Our definition of non-collusion rules out the algorithms that collude on supra-competitive prices identifiable without ``seeing through the minds'' of the sellers deploying them. Note that this definition does leave two ways sellers could be supra-competitive but plausibly competitive: by acting as though their costs are higher than their actual costs or ignoring information that they may have in the market that would result in lower prices. Since the regulator cannot see into the minds of the sellers, the legal standards suggest that such plausible non-collusion is not illegal.\footnote{There may be opportunities to be further restrictive for pricing algorithms recommended by platforms like AirBnB and eBay that must be configured by individual sellers with information such as their costs.  When the costs are reported and can be logged by the algorithm, it could be required that the prices are competitive for the reported costs.} This parallels the modern legal theory of regulating non-algorithmic collusion via explicit agreements. 
Our definition also places a minimum burden on sellers deploying pricing algorithms that satisfy the economic properties. Algorithms that satisfy the properties and collect the relevant data are known and new algorithms satisfying the property can be augmented to collect relevant data with minor effects on their performance. Therefore, it is feasible for the regulator to make it a requirement for all pricing algorithms without putting excessive burdens on firms adopting them.

In summary, our main contributions are as follows: (1) a definition of non-collusion; (2) a framework for empirically auditing pricing algorithms for whether they satisfy plausibly non-collusion; and (3) an instantiation of our framework and an analysis of its statistical complexity. Using our provided framework, algorithms can collect data to prove their plausible non-collusion and regulators can audit algorithms without checking source code or limiting algorithms to be a pre-approved set.
 The main technical analysis of our definition of non-collusion contributes a quantification of the sample complexity (i.e., how much data is necessary) for an algorithm to collect to prove with high confidence that it is plausibly not colluding.  

\subsection{Related literature}

 As of current, collusion is regulated in the US legal system by three core federal antitrust laws: the Sherman Act (1890), the Federal Trade Commission Act (1914), and the Clayton Act (1914). The standard legal definition for collusion leaves open the issue of whether an express agreement through overt communication is needed for the behavior to be deemed illegal \citep{chassangRegulatingCollusion2023}. Earlier rulings such as  \citet{InterstateCircuit1939} and \citet{AmericanTobaccoCo1946} found firms engaging in illegal collusion without any explicit agreement via communication. However, more recent judicial decisions, such as \citet{1993brooke}, have evolved to require the presence of such agreements. Tacit collusion by itself is \textit{not} a violation of the Sherman Act \citep{InreTextMessagingAntitrustLitigation2015}. The raison d'\^etre for requiring express agreement is that it gives an explicit condition that courts can establish. Courts have declined to impose antitrust liability for tacit collusion alone, partly because it is difficult to distinguish tacit collusion from independent decision-making that simply takes into account the actions of rivals in oligopolistic markets \citep{yao1993game}. As Judge Breyer puts it, ``[it] is not because such [parallel] pricing is desirable (it is not), but because it is close to impossible to devise a judicially enforceable remedy for `interdependent pricing'. How does one order a firm to set its prices \textit{without regard} to the likely reactions of its competitors?'' \citep{1988clamp} The courts use an analytical framework that permits an inference of conspiracy where there is circumstantial evidence of tacit collusion ``plus'' something else that tends to ``exclude the possibility that the alleged conspirators acted independently'' \citep{yao1993game, kovacic2011plus}. 
\citet{kovacic2000antitrust} provides a detailed review of the evolution of thinking about competition as reflected by major antitrust decisions and research in industrial organizations.

Economists study collusion mostly via the lens of oligopoly theory. Non-cooperative game theory is the currently accepted economic model to analyze oligopolistic interactions \citep{yao1993game}. Despite the vast literature, there is no unified theory of oligopolistic rivalry, though the mainstream models share common assumptions and approaches. That is, economists agree on what elements a ``good'' model should contain \citep{yao1993game, werden2004economic}. Earlier works on oligopoly theory include \citet{stigler1964theory} and \citet{friedman1971non}. \citet{werden2004economic} provides a good review of basic terms and concepts in game theory as well as modern oligopoly theory.  Our definition takes a similar approach, but relaxes competitive behavior to be unilateral and to allow correlated side information.

There has also been a lot of recent work on pricing algorithms and whether/how they could lead to potentially collusive outcomes. Empirical work, such as \citet{assad2020algorithmic}, studied the effects of pricing algorithms in the German retail gasoline market. They found that prices increased substantially after both firms in a duopoly switched from manual to algorithmic pricing \cite{gal2023limiting}. In a well-cited simulation study, \citet{calvano2020artificial} showed that a commonly used reinforcement learning algorithm learned to initiate and sustain a supra-competitive equilibrium when only instructed to maximize its own profits in a simultaneous, repeated price competition. \citet{klein2021autonomous} observes a similar reward-punishment pattern as \citet{calvano2020artificial}. At the same time, there is also research that provides evidence for the opposite argument. For example, \citet{abada2023artificial} showed that seemingly collusive outcomes could originate in imperfect exploration rather than excessive algorithmic sophistication. \citet{den2022artificial} examined the Q-learning algorithm used in \citet{calvano2020artificial} in detail and concluded that ``simulations presented by Calvano et al. (2020a) do not give sufficient evidence for the claim that these types of Q-learning algorithms systematically learn collusive strategies.'' \citet{banchio2023adaptive} developed a theory of explaining the collusive behavior of learning algorithms by their statistical linkage.

Between the economic and legal literature, there seems to be a gap between how they view collusion. While the law examines whether competitors have taken possibly avoidable actions from which an anti-competitive agreement may be inferred, economic theory is more concerned with what final coordinated outcomes may be produced by certain conduct \citep{yao1993game}. This may explain why some legal scholars tend to use the term ``collusion'' more narrowly to refer to illegal cartelization only (and not legal oligopolistic coordination) \citep{gal2023limiting}. 
The term ``algorithmic collusion'' lends itself to different interpretations, and we use it throughout the paper to refer to ``algorithmic tacit collusion'' as opposed to ``algorithmic explicit collusion'',  where algorithms implement an existing collusive strategy potentially defined or agreed upon by humans \citep{gautier2020ai}.

Last but not least, this paper builds on an extensive literature on algorithms for dynamic learning of prices. Early papers by \citet{bar2002incentive,kleinberg2003value, blum2003online,blum2005near} show that the dynamic learning of prices fits into the framework of multi-armed bandit learning, enabling a large portfolio of well-studied algorithms to be successfully applied. Multi-armed bandit learning can be applied in repeated interactions with partial feedback\footnote{Here ``partial feedback'' is that the learner only learns the outcome of the selected price, they do not learn counterfactual outcomes of other prices.}. There is a canonical reduction from multi-armed bandit learning to online learning (with full feedback, e.g., where the learner also learns the payoffs of counterfactual prices) that employs propensity scoring, i.e., constructing unbiased estimators of counterfactual payoffs. \citet{blum2007external} reduce best-in-hindsight learning (a.k.a., external regret) to calibrated learning (a.k.a., internal or swap regret). \citet{Nekipelov2015} consider inferring values of low regret ad buyers from bidding data assuming a full feedback model. Our analysis is based on their definition of the rationalizable set of values and regrets for a buyer, naturally applied to the dual problem of a seller with a cost, and generalized from the full feedback setting to the partial feedback setting. 

Concurrent to our work, \citet{chassangRegulatingCollusion2023} informally discussed the idea of enforcing a property (known as ``no regret'') on pricing algorithms based on the observation of \citet{chassang2022robust}.  \citet{chassang2022robust} proposed a test of competitive behaviors in procurement auctions and applied it to real-world data. They define competitive behavior as perfect public
Bayesian equilibrium \citep{athey2008collusion} in Markov perfect strategies \citep{maskin2001markov}. They derive necessary conditions for beliefs of the firms participating procurement auctions to be consistent with competitive behaviors, which are testable with  data containing biding history of all firms. In their model, the cost of the firms  (sellers) can be different across rounds while we consider the setting where the firm has fixed cost. On the other hand, our approach is unilateral. We do not make any assumption on the belief of the firm of concern and do not require full information about the other participants of the market.
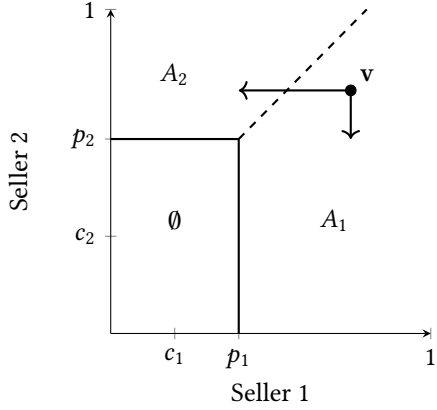
\begin{figure}
    \centering
    \begin{tikzpicture}
    \begin{axis}[
      width=2.3in,
      height=2.3in,
        axis lines = left,
        xlabel = {Seller 1},
        ylabel = {Seller 2},
        xmin=0, xmax=1,
        ymin=0, ymax=1,
        xtick={0.2,0.4,1},
        ytick={0.3,0.6,1},
        xticklabels={$c_1$,$p_1$,1},
        yticklabels={$c_2$,$p_2$,1},
    ]

    \addplot[black, thick] coordinates {(0.4,0) (0.4,0.6)};
    \addplot[black, thick] coordinates {(0,0.6) (0.4,0.6)};
    
    \addplot[black, thick, dashed] coordinates {(0.4,0.6) (0.8,1)};

    \node at (axis cs:0.7,0.35) {$A_1$};
    \node at (axis cs:0.2,0.8) {$A_2$};
    \node at (axis cs:0.2,0.35) {$\emptyset$};

      \addplot [mark=*,black] coordinates {(0.75,.75)} node[above right] {$\mathbf{v}$};
    
      \draw [black,thick,arrows=->] (axis cs: 0.75,.75) -- (axis cs: 0.75,0.6);
      \draw [black,thick,arrows=->] (axis cs: 0.75,.75) -- (axis cs: 0.4,0.75);

    \end{axis}
    \end{tikzpicture}
        \caption{Geometric illustration of the example at a particular round. Two sellers post prices $p_1$ and $p_2$ respectively. The buyer buys from Seller 1 if his valuation $\text{v} \equiv (v_1, v_2)$ (represented by the black dot), lies in the region $A_1$. He buys from Seller 2 if $\text{v}$ lies in the $A_2$. The buyer buys from neither seller if $\text{v}$ lies in the region $\emptyset$.}
        \label{fig:example}
\end{figure}

\section{Preliminaries}
\subsection{Dynamic Imperfect Price Competition}
\label{sec:dynamic_price_comp}
We consider a setting of dynamic imperfect price competitions with $k$ discrete price levels in which $n$ sellers repeatedly compete for selling one unit of good or service (hereafter referred to as ``good'') over $T$ rounds.  Seller $i$ has a fixed cost $c_{i}$ to produce a unit of the good. In each round $t$: 
\begin{itemize}
    \item Seller $i$ posts a price $p^t_i \in \priceset$, where $\priceset$ is the set of discretized price levels with $\setsize{\priceset}=k$.
    \item The market condition for seller $i$ is captured by a demand function\footnote{In mechanism design and auction contexts, demand functions are usually referred to as ``allocation rules''. This demand function encapsulates behaviors of consumers. } $x_i^t:\priceset^n\to [0,1]$. It produces the demand $x_i^t(p_i^t,p_{-i}^t)$ for seller $i$ where $p_{-i}^t=(p_1^t, \dots, p_{i-1}^t,p_{i+1}^t, \dots, p_n^t)$ is the prices of other sellers. In other words, $x_i^t$ is jointly determined by the prices posted by all sellers. Assuming normal goods, fixing $p_{-i}$, demand $x_i$ is monotonic in $p_i$.
    \item Seller $i$ gets a payoff of $u_i^t(p^t_i)=(p^t_i-c_{i})\,x_i^t(p^t_i,p^t_{-i})$. 
\end{itemize}

{An illustrative example with two sellers is shown in \autoref{fig:example}. Seller $1$ and $2$ have cost $c_{1},c_{2} \in [0,1]$ respectively. The price levels $\priceset \subseteq [0,1]$. At each round $t$, a buyer shows up with valuation $v^t_1$ and $v^t_2$ for the goods provided by the two sellers respectively. After seeing the prices $p^t_1$ and $ p^t_2$ posted by the two sellers, the buyer chooses to buy from seller $i$ that maximizes his utility $v^t_i-p^t_i$ if $v^t_i -p^t_i\geq 0$, breaking ties in favor of Seller 1.  He ``buys nothing'' if $v^t_i -p^t_i < 0$ for both $i=1$ and $2$. Suppose each buyer draws his valuations $(v^t_1, v^t_2)$ from the distribution $D^t$, the demand for Seller 1's good is
\begin{align*}
    x^t_1(p^t_1, p^t_2) &= \Pr_{(v^t_1, v^t_2)\sim D^t}[v^t_1-p^t_1 \geq \max(v^t_2-p^t_2,0)].
\end{align*}
A special case of this example is when the buyer's valuations of the goods of the two sellers are i.i.d. uniformly distributed over $[0,1]^2$ for every round, i.e., $D^t = U[0,1]\times U[0,1]$. If we further assume that each seller sets one fixed price to post for all rounds, with cost $c_1=0.1$ and $c_2=0.2$, the equilibrium prices are $p_1^{eq} \approx 0.50$ and $p_2^{eq} \approx 0.55$. However, if the two sellers collude by setting supra-competitive prices $p_1^c \approx 0.60$ and $p_2^c \approx 0.66$, they will get a higher total average revenue.

\subsection{Learning Problem of Sellers}
The dynamic pricing problem of each seller is essentially an online learning problem. At round $t$,  seller $i$'s pricing algorithm chooses a price distribution $\pi_i^t \in \distover{\priceset}$ from the set of all distributions over the prices based on her information about the history and the market. She then draws a price $p_i^t$ from the distribution $\pi_i^t$ and posts $p_i^t$.  Following the model of online learning with bandit feedback, we make the minimum assumption about the information a seller possesses: After posting $p_i^t$ at round $t$, seller $i$ observes the demand for her good $x_i^t(p_i^t,p_{-i}^t)$ and hence her payoff $u_i^t(p_i^t,p_{-i}^t)$  as she knows her cost $c_i$. 

To measure the performance of a seller's online learning algorithm, we employ the concept of hindsight \textit{calibrated regret}, which is defined as the benefit of the best-in-hindsight remapping of the prices chosen by the algorithms. We instantiate this definition in the setting of price competition. 

\begin{definition} Given a sequence of historical prices $\{(p_i^t,p_{-i}^t)\}_{t=1}^T$ and demand functions $\{x_i^t(\cdot)\}_{t=1}^T$ for seller $i$ with cost $c_i$, the \emph{hindsight (realized) regret against a fixed price remapping $\sigma : \priceset \to \priceset$} is
\begin{align*}
    \regret^T(\sigma,c_i) &= \frac{1}{T}\sum_{t=1}^T u_i^t(\sigma(p_i^t),p_{-i}^t) - u_i^t(p_i^t,p_{-i}^t) \\ &= \frac{1}{T}\sum_{t=1}^T(\sigma(p_i^t)-c_i)x_i^t(\sigma(p_i^t)) - (p_i^t-c_i)x_i^t(p_i^t).
\end{align*}
The \emph{hindsight calibrated (realized) regret} is defined as  the maximum hindsight calibrated regret over all remappings, $\max_\sigma \regret^T(\sigma, c_i)$.
\end{definition}

Since a seller's pricing algorithm chooses a distribution of prices at each round, a performance measure without considering a particular realization of the distributions is the expected regret.

\begin{definition}
    Given a sequence of historical price distributions $\{\pi_i^t\}_{t=1}^T$, prices $\{p_{-i}^t\}_{t=1}^T$, and demand functions $\{x_i^t(\cdot)\}_{t=1}^T$, the \emph{expected regret of seller $i$ with cost $c_i$ against a fixed price remapping $\sigma :\priceset \to \priceset$} is
\begin{align*}
    \ExpRegret^T(\sigma, c_i) &= \frac{1}{T}\sum_{t=1}^T\EExp{p_i^t \sim \pi_i^t}{u_i^t(\sigma(p_i^t),p_{-i}^t) - u_i^t(p_i^t,p_{-i}^t)} \\&= \frac{1}{T}\sum_{t=1}^T\EExp{p_i^t \sim \pi_i^t}{(\sigma(p_i^t)-c_i)\cdot x_i^t(\sigma(p_i^t),p_{-i}^t)\\&\phantom{\frac{1}{T}\sum_{t=1}^T\EExp{p_i^t \sim \pi_i^t}{}} - (p_i^t-c_i)\cdot x_i^t(p_i^t,p_{-i}^t)}.
\end{align*}
The \emph{expected calibrated regret} for $i$ is defined as $\max_\sigma \ExpRegret^T(\sigma, c_i)$.
\end{definition}

Note that calibrated regret is also called ``swap regret'' or ``internal regret'' in the literature. In addition, there is a common weaker notion of regret, known as the \textit{hindsight external regret}. Hindsight external regret is defined as the benefit of deviating to a single best-in-hindsight action.
\begin{definition}
\label{def:external-regret}
    Given a sequence of historical prices $\{(p_i^t,p_{-i}^t)\}_{t=1}^T$ and demand functions $\{x_i^t(\cdot)\}_{t=1}^T$ for seller $i$ with cost $c_i$, the \emph{hindsight external regret}
\begin{align*}
    R_{\mathrm{ext}}^t(c_i) &=  \max_p \frac{1}{T}\sum_{t=1}^T u_i^t(p,p_{-i}^t) - u_i^t(p_i^t,p_{-i}^t) \\ &= \max_p \frac{1}{T}\sum_{t=1}^T(p -c_i)x_i^t(p) - (p_i^t-c_i)x_i^t(p_i^t).
\end{align*}
\end{definition}
Unlike calibrated regret, the definition of hindsight external regret does not allow beneficial side information. Later in this paper, we argue that hindsight external regret is insufficient precluding collusion. 

Based on the results of \citet{auer2002nonstochastic}, 
\citet{blum2007external} and \citet{stoltz2005incomplete} give algorithms that achieve vanishing expected calibrated regret for an individual seller regardless of the market condition and other sellers' behavior. Such algorithms are among those generally referred to as ``no-regret learning algorithms''.

\begin{theorem}[\citealp{blum2007external,stoltz2005incomplete}]
    For a seller $i$ with cost $c_i$, there exists an online algorithm such that:
    \begin{itemize}
        \item At round $t$, it outputs a price distribution $\pi_i^t \in \priceset$ based on the history $\{p_i^{s}\}_{s=1}^{t-1}$ and $\{x_i^{s}(p_i^{s})\}_{s=1}^{t-1}$.
        \item The expected regret of seller $i$ given any sequence of $\{p_{-i}^t\}_{t=1}^T$ and $\{x_i^t(\cdot)\}_{t=1}^T$ with the algorithm's output $\{\pi_i^t\}_{t=1}^T$ satisfies  $\max_{\sigma}\ExpRegret^T(\sigma,c_i) = o(1)$.
    \end{itemize}
\end{theorem}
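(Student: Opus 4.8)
The plan is to prove the theorem by composing two standard reductions — calibrated (swap) regret to external regret, and bandit feedback to full-information feedback via propensity scoring — on top of the external-regret algorithm of \citet{auer2002nonstochastic}. The structural feature to exploit throughout is that every quantity that enters the analysis is linear in the per-round payoff vector $(u_i^t(q,p_{-i}^t))_{q\in\priceset}$, so an unbiased estimate of that vector can be substituted for the true one without changing any expectation.

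\emph{Step 1: full-information swap-to-external reduction.} Following the reduction of \citet{blum2007external}, run $k=\setsize{\priceset}$ copies of an external-regret minimizer over $\priceset$, indexed by prices $q\in\priceset$; at round $t$ copy $q$ reports a distribution $\mu_q^t\in\distover{\priceset}$. Let $\pi_i^t$ be a stationary distribution of the row-stochastic matrix whose $q$-th row is $\mu_q^t$ (it exists and is efficiently computable), play $p_i^t\sim\pi_i^t$, and feed copy $q$ the payoff vector scaled by $\pi_i^t(q)$. Because $\pi_i^t(q')=\sum_q\pi_i^t(q)\,\mu_q^t(q')$, the master's expected payoff $\sum_{q'}\pi_i^t(q')\,u_i^t(q',p_{-i}^t)$ equals $\sum_q\pi_i^t(q)\sum_{q'}\mu_q^t(q')\,u_i^t(q',p_{-i}^t)$, i.e. a sum of the (scaled) expected payoffs of the $k$ copies; and for any remapping $\sigma$, the $\pi_i^t$-weighted payoff of $\sigma$, namely $\sum_q\pi_i^t(q)\,u_i^t(\sigma(q),p_{-i}^t)$, is dominated summand by summand by the payoff of ``copy $q$ always plays the fixed price $\sigma(q)$''. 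Summing the per-copy external-regret guarantees and applying Cauchy--Schwarz to $\sum_q\sqrt{\sum_t\pi_i^t(q)}$ bounds $\max_\sigma\ExpRegret^T(\sigma,c_i)$ by $O(\sqrt{k\log k/T})$, uniformly over the sequences $\{p_{-i}^t\}$ and $\{x_i^t\}$.

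\emph{Step 2: bandit feedback.} Seller $i$ observes only $x_i^t(p_i^t,p_{-i}^t)$, hence only $u_i^t(p_i^t,p_{-i}^t)$ for the played price. Substitute for the true payoff vector the importance-weighted estimator $\hat u_i^t(q)=u_i^t(q,p_{-i}^t)\,\mathbf{1}[p_i^t=q]/\pi_i^t(q)$, and force exploration by mixing $\pi_i^t$ with the uniform distribution over $\priceset$ with probability $\gamma$, so that $\pi_i^t(q)\ge\gamma/k$ for all $q$. Conditioned on the history, $\hat u_i^t$ is unbiased for the true payoff vector, so the expected calibrated regret computed against $\hat u_i^t$ equals the true one up to an $O(\gamma)$ per-round bias from exploration; and since prices and demands are bounded, $u_i^t(\cdot)$ is bounded and $\hat u_i^t$ has second moment $O(k/\gamma)$. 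Feeding $\hat u_i^t$ into the external-regret bound of \citet{auer2002nonstochastic} for each of the $k$ copies and optimizing $\gamma$ then yields $\max_\sigma\ExpRegret^T(\sigma,c_i)=\mathrm{poly}(k)\cdot T^{-\Omega(1)}=o(1)$ for fixed $k$.

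The main obstacle, and essentially the only place that needs care, is Step 2: one must confirm that the unbiased estimator can be inserted into the reduction of \citet{blum2007external} without breaking its stationarity/linearity identity (it can, since both that identity and each copy's internal updates depend on the payoffs only through the payoff vector), and then tame the variance inflation of order $1/\pi_i^t(q)$ — precisely what the exploration parameter $\gamma$ is tuned against. Everything else (existence of the stationary distribution, telescoping of the per-copy bounds, robustness of the final bound to arbitrary $\{p_{-i}^t\}$ and $\{x_i^t\}$) is routine. Alternatively, the theorem follows by invoking the bandit internal/swap-regret algorithm of \citet{stoltz2005incomplete} as a black box, after the one-line observation that the competition payoffs $u_i^t(p)=(p-c_i)\,x_i^t(p)$ are bounded.
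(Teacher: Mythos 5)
Your proposal is correct and follows essentially the route the paper itself relies on: the paper does not prove this theorem but cites \citet{blum2007external} and \citet{stoltz2005incomplete}, and in its related-work discussion it describes exactly your composition — the external-to-calibrated (swap) regret reduction of Blum--Mansour on top of \citet{auer2002nonstochastic}, with propensity scoring to pass from bandit to full feedback. Your sketch is the standard argument behind those citations (with only routine care needed in tuning the exploration rate against the $1/\pi_i^t(q)$ variance), so it matches the paper's intended justification.
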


A characteristic of no-regret learning algorithms is that they lead to \emph{correlated equilibrium} \citep{foster1997calibrated}. 
Correlated equilibrium \citep{aumann1974subjectivity} is a static equilibrium concept that is often described as a mediator that draws a profile of prices from a joint distribution and privately suggests a corresponding price for each seller. The joint distribution of prices is a correlated equilibrium if each seller has no incentive to deviate from their suggested price.
\begin{definition}
    A joint distribution of prices $\jointpricedist \in \distover{\priceset^n}$ is a \emph{correlated equilibrium} if for each seller $i$, any realized price $p_i$ from the distribution is a best response conditional on $p_i$:
\begin{align*}
    p_i &\in \argmax_p \EExp{(p_i,p_{-i}) \sim \jointpricedist}{u_i(p,p_{-i}) \given p_i} \text{  for all $p_i$}. 
\end{align*}
\end{definition}

The regulator of pricing algorithms may not know the cost of the sellers.  \citet{Nekipelov2015} consider inferring both the costs and the external regrets of learning algorithms (Definition \ref{def:external-regret}).  They refer to the pairs of costs and external regrets that are consistent with the data to be the {\em rationalizable set}.  They show how to identify these rationalizable sets by assuming the pricing data contains counterfactual outcomes, i.e., what would have happened if a seller used a different price. We generalize this method to pricing data that does not contain counterfactual outcomes and the inference of calibrated regret.
\begin{definition}
    Given the historical price distributions $\{\pi_i^t\}^T_{t=1}$, prices $\{p_{-i}^t\}_{t=1}^T$ and demand functions $\{x_i^t(\cdot)\}_{t=1}^T$, a cost-regret pair $(c_i, \epsilon_i)$ for seller $i$ is \emph{rationalizable} if it satisfies
    \begin{equation}
    \forall \sigma,   \frac{1}{T}\sum_{t=1}^T\EExp{p_i^t \sim \pi_i^t}{(\sigma(p_i^t)-c_i)\cdot x_i^t(\sigma(p_i^t),p_{-i}^t) - (p_i^t-c_i)\cdot x_i^t(p_i^t,p_{-i}^t)} \leq \epsilon_i
    \end{equation}
  We define the rationalizable set $\mathcal{NR}_i\left(\{\pi_i^t\}^T_{t=1}, \{p_{-i}^t\}_{t=1}^T,\{x_i^t(\cdot)\}_{t=1}^T\right)$ as the set of all rationalizable pairs $(c_i, \epsilon)$. 
\end{definition}

Each point $(c_i, \epsilon_i)$ on the lower boundary of the rationalizable set gives the maximum expected calibrated regret $\epsilon_i$ of seller $i$ when she has cost $c_i$. The rationalizable set can be efficiently computed via the method provided in \citet{Nekipelov2015} with minimal assumptions.

\subsection{Collusive Equilibria in Repeated Games}

The setting of dynamic imperfect price competition is a \textit{repeated} game. On the other hand, correlated equilibria described previously is an equilibrium concept for a \textit{static} game (not repeated). Hindsight calibrated no-regret learning algorithms in the dynamic game, as we have seen, converge to this static equilibrium concept. The literature on repeated games, on the other hand, typically talks about dynamic equilibrium concepts, where an agent can explicitly condition on the actions of other agents in previous stages, perhaps to punish them for deviating from some prescribed strategy. We view such equilibria as collusive. The so-called ``folk theorems'' of repeated games describe outcomes that are possible as equilibria of the repeated game. 

\citet{benoit1984finitely} give a folk theorem for finitely repeated games. Stated in words: under weak conditions, any feasible and individually rational payoff of the static game can be approximated by the average payoff in a subgame-perfect equilibrium of a repeated game with a sufficiently long horizon.  A stable collusive outcome in a static pricing game is one where some players are best responding while other players are in a coalition and obtain higher individual payoffs than they would in an equilibrium that could result if they were to all be best responding.  In this outcome all best responding players are obtaining at least their individually rational payoffs and all colluding players are obtaining strictly more than their equilibrium payoffs which are at least their individually rational payoffs.  Thus, the folk theorem implies that equilibria in finitely repeated pricing games can approximate any stable collusive outcomes that exist.

\begin{corollary}
In a dynamic imperfect price competition game that is finitely repeated with a sufficiently long time horizon, any stable collusive outcome can be approximated by an equilibrium of the repeated game.
\end{corollary}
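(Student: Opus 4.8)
The plan is to read this corollary as a direct instantiation of the folk theorem for finitely repeated games of \citet{benoit1984finitely}, applied to the stage game consisting of a single round of the dynamic imperfect price competition of \autoref{sec:dynamic_price_comp}. First I would fix the static pricing game $G$: the players are the $n$ sellers, seller $i$'s action set is the finite price grid $\priceset$, and seller $i$'s payoff under a profile $(p_i,p_{-i})$ is $u_i(p_i,p_{-i})=(p_i-c_i)\,x_i(p_i,p_{-i})$. Since $\priceset$ is finite, $G$ is a finite normal-form game, and the ``finitely repeated game'' of the corollary is $G$ played for $T$ rounds with perfect monitoring of the posted price profile. Let $V\subseteq\mathbb{R}^n$ be the feasible payoff set (the convex hull of $\{(u_1(p),\dots,u_n(p)):p\in\priceset^n\}$), let $\underline{v}_i$ be seller $i$'s minmax value in $G$, and call a payoff vector $v$ strictly individually rational if $v_i>\underline v_i$ for every $i$.

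Second, I would check that any stable collusive outcome induces a payoff vector $v$ that lies in $V$ and is (essentially) individually rational. An outcome is a profile of play in $G$ (possibly with correlation, as throughout the paper), so its payoff vector is a convex combination of vectors $(u_1(p),\dots,u_n(p))$ and hence belongs to $V$. The best-responding sellers earn at least their minmax values since they are playing best responses to the others' play; the colluding sellers earn strictly more than their equilibrium payoffs, which are themselves at least their minmax values, so those coordinates are strictly above minmax. Thus $v$ is individually rational, and by taking a convex combination of $v$ with any strictly individually rational point of $V$ I obtain, for every $\eta>0$, a strictly individually rational $v^\eta\in V$ with $\|v^\eta-v\|<\eta$.

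Third, I would invoke the folk theorem of \citet{benoit1984finitely}: under its hypotheses on the stage game, for every strictly individually rational $v'\in V$ and every $\epsilon>0$ there is a horizon $T_0$ such that for all $T\ge T_0$ the $T$-round repetition of $G$ admits a subgame-perfect equilibrium whose average payoff vector is within $\epsilon$ of $v'$. Applying this with $v'=v^\eta$ and combining with the triangle inequality yields, for a sufficiently long horizon, a subgame-perfect equilibrium of the repeated pricing game whose average payoffs approximate the collusive payoff vector $v$ to arbitrary accuracy, which is the assertion of the corollary.

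The main obstacle is verifying that the pricing stage game $G$ actually meets the hypotheses of the finite-horizon folk theorem — in the formulation of \citet{benoit1984finitely}, that each seller has at least two stage-game Nash equilibria giving her distinct payoffs (so that within the finite horizon a reward can be promised and a deviation punished) together with a dimensionality condition on $V$. If $G$ has a unique stage equilibrium (the Bertrand-type case), these conditions fail and the only subgame-perfect equilibrium of the finitely repeated game is repeated play of the static equilibrium; but in that regime no stable collusive outcome exists, so the corollary holds vacuously. I would therefore either (i) restrict attention to pricing games whose stage game admits the requisite multiplicity of equilibria — precisely the regime in which algorithmic collusion is a genuine concern — or (ii) state the corollary with those hypotheses made explicit, noting that the very existence of a stable collusive outcome presupposes the punishment structure they provide. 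The remaining points are routine: the passage from weak to strict individual rationality described above (which also requires a strictly individually rational feasible point to exist), and, if the demand functions $x_i^t$ are allowed to vary with $t$, pinning down the ``static pricing game'' as a fixed representative round so that ``repeated game'' carries its usual meaning.
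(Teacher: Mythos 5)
Your proof follows essentially the same route as the paper's: the paper derives the corollary directly from the Benoit--Krishna finite-horizon folk theorem (``under weak conditions'') by noting that in a stable collusive outcome the best-responding sellers earn at least their individually rational payoffs and the colluders earn strictly more than equilibrium payoffs (which are themselves at least individually rational), so the payoff vector is feasible and individually rational and hence approximable by a subgame-perfect equilibrium of a sufficiently long repetition. Your extra care about the folk theorem's hypotheses and the weak-versus-strict individual-rationality step is precisely what the paper compresses into ``under weak conditions''; the only shaky side remark is the claim that a unique stage equilibrium would make the corollary vacuous (a stable collusive outcome, e.g.\ a price-leadership profile, can exist even then), but your alternative of stating the hypotheses explicitly covers this and does not affect the main argument.
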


We have focused on hindsight calibrated no-regret learning algorithms that converge to correlated equilibria in the static pricing game. Another large family of learning algorithms that are natural to use for price competition is no-policy-regret learning algorithms. When a seller is learning how to price, it is natural for competitors to react to the seller's prices with their own pricing strategy. Policy regret algorithms compare their performance to the performance they could have achieved if they switched to a fixed policy and the others in the market responded to this switch.


\citet{arora2018policy} introduced the notion of a \textit{policy equilibrium} that corresponds to outcomes in games played by no-policy-regret learning algorithms. They showed that Policy equilibrium is a strictly larger class than correlated equilibrium. As correlated equilibrium corresponds to calibrated best response by each agent in each round, the policy equilibria that are not correlated equilibria are not best responding in each round. We view these outcomes as collusion. Hence, we view no-policy-regret learning algorithms as problematic for algorithmic pricing.

\section{Framework for Regulating Collusion}\label{sec:framework}

This section presents our definition of plausible non-collusion for sellers and an empirical framework for auditing it. In our model, the seller has a potentially private cost, which is static across rounds, and a potentially private signal that correlates with the demand (and possibly the competition, which might also correlate with the demand). Our framework is based on the following sufficient condition for non-collusion:

\begin{definition}
  It is non-collusive for a seller to approximately best respond to their
  competitive environment.
\end{definition}

While it is plausible that a seller who has not approximately best responded is not colluding, our framework will not be able to conclude that they have not colluded. The existence of algorithms that can easily satisfy the empirical definition we propose is evidence that it is permissible to hold sellers to such a standard. 

We may not know the seller's cost and/or the seller's beliefs on the competitive environment. In fact, these beliefs on the environment may be changing over time (though we assume that the sellers' costs are stationary). We will not require that the regulator knows anything about the seller's costs or beliefs. Instead, we will apply econometric principles of revealed preference and revealed information. If a seller is approximately best responding to their competitive environment, we can infer their cost and whether or not they are consistently using information that reveals what information they possess.  For this reason, our empirical notion of non-collusion is only \textit{plausible}, i.e., there exists a cost and belief that is \textit{consistent} with the data for which the seller has approximately best responded.

It is possible that sellers collude to act as though they have costs or information about the demand seemingly plausible to the regulator, but different from what they actually have. A regulator uninformed about the true cost and true information possessed by the sellers will not be able to detect such collusion. Our philosophy is that these possibilities exist already in the regulation of collusion absent algorithms, and our focus is on solving new challenges introduced by algorithms by essentially reducing them to the old challenges of regulating collusion.

Our definition of non-collusion is unilateral by definition. A seller can satisfy it regardless of the actions of other sellers. Specifically, it will not be important to explicitly model the detailed actions of other sellers, only the impact of those actions on the outcome of a seller.  A seller's outcome, given the actions of other sellers and buyers, is a function $x: \priceset \to [0,1]$ from their price $p \in \priceset$ to a quantity of goods sold at this price, a.k.a., a demand function. We will assume that the goods are normal goods, i.e., the demand function is monotonic where increasing price results in (weakly) decreasing allocation.

We first give a static definition of non-collusion that applies to a single round of pricing. We then generalize the definition to repeated pricing and allow for statistical learning.

\begin{definition}
A joint distribution on pairs of price and demand function $\jointpricedist \in \distover{\priceset \times (\priceset \to [0,1])}$ is in \emph{calibrated best-response} for a seller with cost $c$ if, conditioned on the seller's price $p$, $p$ is a best response:
\begin{align}
    p \in& \argmax_{p'} \EExp{(p,x) \sim \jointpricedist}{(p' - c)\, x(p')) \given p} & \forall\ p \in \priceset. 
\end{align}
\end{definition}

Calibrated best response captures what it means to be a good algorithm and allows the algorithm to use side information. Collusion is a potentially tacit agreement between sellers to keep prices higher than those in each seller's best interest, given the prices of the other sellers. On the other hand, best responding to the market and in particular what other sellers are doing is not collusion. Calibration allows side information. If the side information is useful, it manifests in distinct prices. The definition conditions the best response on the prices. In other words, calibration requires an internal consistency with respect to information that is revealed to be possessed in variation of prices. It is easy to observe that the calibrated best response is the unilateral version of correlated equilibrium. If all sellers' prices satisfy the calibrated best response then the joint distribution of prices is a correlated equilibrium.

While it might seem that allowing correlation is allowing collusion, we argue that, in fact, no reasonable definition of collusion can forbid correlation of prices. Specifically, non-collusion is inherently about best responding to market conditions. When consumer demand changes, the best response prices change. Consumer demand is something that all sellers should be measuring and it is correlated across sellers that are in price competition. Therefore, correlation must be allowed. Calibration is a minimal allowance of correlation and, in particular, it is agnostic to various potential sources of correlation and does not require that they be explicitly modeled.

\begin{definition}
    A joint distribution on pairs of price and demand $\jointpricedist$ is \emph{non-collusive} for a seller $i$ if $\jointpricedist$ satisfies the calibrated best response for $i$.
\end{definition}

In repeated environments, where sellers are learning about what prices are good, their prices might not be in the best response. However, as the learning proceeds, the distance from the best response should diminish. This property is captured by the following definition for the dynamic settings.

\begin{definition}
An infinite sequence of pairs of price and demand $\{(p^t,x^t)\}^t$ is \emph{calibrated vanishing regret} for a seller with cost $c$ if, the maximum average per-round benefit of deviation over the set of price remaps $\sigma: \priceset \to \priceset$,  up to a given round, approaches zero as the number of rounds goes to infinity: 
\begin{equation}
     \lim_{T\to \infty}\max_{\sigma}\frac{1}{T}\sum_{t=1}^Tu(\sigma(p^t),x^t)-u(p^t,x^t) = 0. \label{eqn:calibrated-vanishing-regret}
\end{equation}
where the payoff for a price $p$ on demand $x$ is $u(p,x) = (p - c)\, x(p)$
\end{definition}

Calibrated vanishing regret and calibrated best response are related in that:
\begin{itemize}
    \item If we draw a sequence of prices from a joint distribution that satisfies calibrated best response for the seller (and payoffs are bounded), then this sequence of prices will satisfy calibrated vanishing regret for the seller; and
    \item in the limit with the number of rounds, the uniform distribution on price-demand pairs (a.k.a. the empirical distribution) given by a sequence that satisfies calibrated vanishing regret for the seller approaches a distribution of prices in calibrated best responses for her.
\end{itemize}

These two properties give a unilateral version of an equivalence observed by  \citet{foster1997calibrated}: If the conditions hold for all sellers, then the empirical distribution of the price sequence approaches a correlated equilibrium.

Note that requiring calibration is important in our definition of non-collusion. The weaker notion of vanishing external regret (\Cref{def:external-regret}) does not require calibration, and it fails to rule out certain collusive behaviors when the sellers have private information about the demand. 

We demonstrate the problem with external regret with the numerical example discussed in Section \ref{sec:dynamic_price_comp}: Two sellers have cost $c_1=0.1$ and $c_2=0.2$ respectively and the buyer's valuations are i.i.d. uniform over $[0,1]$. However, Seller 1 now possesses private information. She can tell if an incoming buyer has a low valuation for both sellers, i.e., $v_1 \leq 0.5$ and $v_2 \leq 0.5$. Thus, she can post a different price for these buyers. Seller 1 can take advantage of this private information to collude with Seller 2 while still having non-positive external regret: Seller 2 posts a fixed price $p_2 = 0.66$. Seller 1 posts $p_1^L=0.3$ when she knows the buyer has a low valuation, and the same price $p_1=0.66$ as Seller 2 otherwise. Recall that external regret compares Seller 1's revenue against the best single price she could set. However, compared to a single price that always undercuts Seller 2, it is better off for her to get more revenue from low buyers and not compete against Seller 2 on high buyers.
On the other hand, in this example, Seller 1 does have positive calibrated regret. She can get a higher revenue by posting $p'_1=0.60$ whenever she posts $p_1=0.66$ under the current collusive strategy. In other words, the calibrated best response condition fails to hold, i.e., conditional on her posting $p_1=0.66$, $p_1=0.66$ is not a best response.

Our methods will not require the regulator to know the exact cost of a seller. It will be sufficient to know that the seller's cost is in a bounded range $\costrange$.  The regulator will assume the seller's regret is the minimum one that achieves costs in this range.

\begin{definition}
  An infinite sequence of price profiles of a seller is \emph{plausibly non-collusive} for cost range $\costrange$ if it satisfies calibrated vanishing regret for some cost $c \in
  \costrange$.
\end{definition}

There is a long literature that develops good learning algorithms for pricing with unknown demand, specifically by satisfying vanishing calibrated regret. Algorithms that do not satisfy vanishing calibrated regret are making mistakes in optimization that are apparent from the data. Given the information that the algorithms have which is revealed in the prices, they are not optimizing well enough that calibrated regret vanishes. We view this failure of optimization as a mistake, and algorithms that make this mistake as not good.

\begin{definition}
    An \emph{environment} is a process that generates the sequence of demands based on an algorithm's past decisions, i.e., a sequence of functions mapping a history of prices or distributions of prices, to a distribution of demand functions.
\end{definition}

A stochastic environment is an environment where the demand functions for each round are independent and identically distributed. An adversarial environment is an environment where the demand functions for each round are assumed to be generated in a way against the algorithm running in it.

\begin{definition}
    An algorithm is \emph{good} for a cost $c$ in an environment if it satisfies vanishing calibrated regret for cost $c$.
    \label{def:good}
\end{definition}


Calibrated vanishing regret cannot be directly observed in the data of a learning algorithm because (a) in practice, only data from a finite horizon can be observed, and (b) outcomes for counterfactual prices are not generally known.

With data observed from a finite horizon of length $T$, the methodology of property testing \citep{gol-10} can be used to check whether the expected calibrated regret of a seller at $T$ is below a threshold, which approximates the calibrated vanishing regret.
\begin{definition}
    The \emph{expected calibrated regret at time $T$} for a seller with cost $c$ against price remapping $\sigma$ is
\begin{equation}
    \Er{\sigma}{c} = \frac{1}{T}\sum_{t=1}^T\EExp{p^t \sim \pi^t}{u(\sigma(p^t),x^t)-u(p^t,x^t)}
\end{equation}
\end{definition}

\begin{definition} 
    The seller's \emph{plausible calibrated regret at time $T$} is $R^T_{\ast}= \min_{c\in\costrange}\max_{\sigma}\Er{\sigma}{c}$.
\end{definition}

Note that the seller's plausible calibrated regret $R^T_{\ast}$ is always smaller than her true calibrated regret $R^T = \max_{\sigma}\Er{\sigma}{c_0}$ when her cost $c_0 \in \costrange$.

While in round $t$, the seller uses price $p^t$ and obtains some utility for it, to test if her regret is low, we need counterfactual outcomes for other prices that could have been used, which we can only estimate based on the data.

We measure the statistical complexity of a low plausible calibrated regret test by the number of rounds $T$ that is sufficient to distinguish the two scenarios with high confidence:
\begin{itemize}
    \item the seller's true calibrated regret is below a given threshold (for sufficiently auditable algorithms); 
    \item the plausible calibrated regret of the seller is far above the given threshold.
\end{itemize}
This gives a two-sided bound while allowing for a failure to identify the low regret of algorithms that do not collect enough data to accurately make such a determination.

\begin{definition}[Sample complexity with auditability requirement]
    A low plausible calibrated regret test 
    has \emph{sample complexity $\Tthresh$ with auditability requirement $C$, confidence $1-\alpha$ and target regret level $\Rthresh$}, if $\Tthresh$ is the minimum $T$ such that
    \begin{itemize}
        \item if the seller's true calibrated regret $R^{T} \leq \Rthresh$ and the transcript satisfies the auditability requirement $C$, she passes the test with probability at least $1-\alpha$;
        \item if the plausible calibrated regret $\PlEr \geq 2\Rthresh$, the seller fails the test with probability at least $1-\alpha$.
    \end{itemize}
\end{definition}



Since counterfactual demand for other prices that could have been used can not be observed, an algorithm needs to keep additional data in the transcript to demonstrate that it has low regret.

Algorithms might not be designed to record such information. Our goal for auditing collusion is to allow any good algorithm to be used. Thus, we look for tests for which any algorithm can be retrofitted to collect the data so that, if their regret is low, they pass the test.

\begin{definition}
A low plausible calibrated regret test with auditability requirement $C$ is \emph{$(\eta_C,r_C)$-audit compatible} if the following holds: Any algorithm $A$ with expected calibrated regret at most $\Rthresh$ for any time horizon at least $T$ can be augmented to an algorithm that produces a transcript satisfying $C$ with expected calibrated regret at most $\Rthresh + r_C$ for a time horizon with expected length $(1+\eta_C)T$.
\end{definition}




To instantiate the above framework for auditing non-collusion, we must do the following:
\begin{itemize}
    \item define a low plausible calibrated regret test; 
    \item prove that the test has good sample complexity; 
    \item define a black-box transcription algorithm for converting any good learning algorithm into one that additionally produces an auditable transcript; and
\item prove that the test is audit-completable with a small loss and small additional time (by analyzing the transcription algorithm).
\end{itemize}
The next section completes these steps.

\section{Empirical Propensity Score Test}\label{sec:pscore}

In this section, we give one instantiation of our framework for auditing the collusion of a seller in dynamic imperfect price competition based on the propensity score estimator, which is a standard method in multi-armed bandit algorithms that have been developed for pricing.

Since we are focusing on one particular seller, as we did in the previous section, we drop the subscript $i$ from the notations for ease of reading and use $x^t(\cdot)$ to denote the demand determined by the environment at round $t$. We will also use \emph{regret} to refer to \emph{expected regret} for simplicity as we are not concerned with the realized regret.

\begin{definition}
The \emph{propensity score transcript} includes the sequences of
\begin{itemize}
    \item   distributions $\{\pi^t\}_{t=1}^T$ produced by the seller's algorithm,
    \item the actual prices posted $\{p^t\}_{t=1}^T$, and,
    \item the observed demand $\{x^t(p^t)\}_{t=1}^T$, i.e., the outcomes of posting price $p^t$ at round $t$, the seller experienced.
\end{itemize} \label{def:pscore-transcript}
\end{definition}

It is assumed that the price $p^t$ is actually drawn from the distribution $\pi^t$.  It is not hard for the seller to commit to doing so and convince the regulator with modern cryptography. 

With the price transcript described above, we define an estimated calibrated regret using the propensity score estimator for the unobserved probabilities of sale for counterfactual prices.
\begin{definition}
Given a price transcript, the \emph{propensity score estimator} for $x^t(\cdot)$ is
\begin{equation}
    \tilde{x}^t(p) =       
    \begin{cases}
        \frac{x^t(p^t)}{\pi^t(p^t)}  & \mbox{if $p = p^t$} \\
        0 & \mbox{otherwise}
    \end{cases}
  \end{equation}
\end{definition}
 The propensity score estimator weights the outcome of each observation inversely proportional to its rarity. Note that for any fixed $p$, $\tilde{x}^t(p)$ is an unbiased estimator for $x^t(p)$ as $\EExp{p\sim \pi}{\tilde{x}^t(p^t)} = x^t(p^t)$. We define the following \emph{estimated calibrated regret} for a seller with cost $c$ and 
against price remapping $\sigma : \priceset \to \priceset$:
\begin{align} \EstEr{\sigma}{c} &= \frac{1}{T}\sum_{t=1}^T\EExp{p^t \sim \pi^t}{\tilde{u}^t(\sigma(p^t)) - \tilde{u}^t(p^t)} 
\\&= \frac{1}{T}\sum_{t=1}^T\EExp{p^t \sim \pi^t}{(\sigma(p^t)-c)\cdot \tilde{x}^t(\sigma(p^t)) - (p^t-c)\cdot \tilde{x}^t(p^t)}. 
\end{align}
and the \emph{estimated calibrated regret} is $\max_\sigma\EstEr{\sigma}{c}$. The estimator estimates the true regret with the propensity score estimator for demand. 

We define \emph{minimum exploration probability} to quantify the exploration demonstrated by a transcript.
\begin{definition}
    The \emph{minimum exploration probability} of a transcript is
    \begin{equation}
        \underline{\pi}^T = \min_{p\in\mathcal{P},t\in\{1,\dots,T\}} \pi^t(p).
    \end{equation}
\end{definition}

To infer the cost of the seller, the regulator can compute the \emph{estimated rationalizable set} \`a la \citet{Nekipelov2015}: 
\begin{align*}\EstNR &= NR\left(\{\pi^t\}_{t=1}^T,\{p^t\}_{t=1}^T,\{\tilde{x}^t(\cdot)\}\right)\\ 
&= \biggl\{(c,\epsilon) : \forall \sigma, \frac{1}{T}\sum_{t=1}^T\EExp{p^t \sim \pi^t}{(\sigma(p^t)-c)\cdot \tilde{x}^t(\sigma(p^t)) \\
  &\phantom{= \biggl\{(c,\epsilon) : } -(p^t-c)\cdot \tilde{x}^t(p^t)}     \leq \epsilon \biggr\} \\
&= \left\{(c,\epsilon): \max_\sigma \EstEr{\sigma}{c} \leq \epsilon \right\}
\end{align*}
and find 
\begin{equation}
    \estplcost \in \argmin_{c \in \costrange}\left\{\epsilon : (c,\epsilon) \in \EstNR\right\} 
\end{equation}
as the \emph{estimated plausible cost} for the seller. $\estplcost$ is the cost with which the seller has the lowest estimated calibrated regret according to the data. ``Having cost $\estplcost$'' is a plausible explanation of the observed data that is most favorable in terms of estimated calibrated regret to the seller. 

To test if a seller's plausible calibrated regret $\PlEr \leq \Rthresh$ for target regret level $\Rthresh$, the regulator conducts the following test on a transcript defined in Definition \ref{def:pscore-transcript}.
\begin{definition}[Empirical propensity score test]
    Let the plausible calibrated regret estimator $\EstPlEr = \min_{\estplcost \in \costrange}\max_{\sigma}\EstEr{\sigma,\estplcost}$,
    and upper confidence bound $\EstUCB = \EstPlEr + \delta^T$ where the upper margin of error 
   \begin{equation*}
    \delta^T = \frac{k\pmax}{T}\sqrt{2\log\left(\frac{2k^2}{\alpha}\right)\cdot \sum_{s=1}^T\left(\frac{1}{\min_{p}\pi^s(p)}+1\right)^2},
    \end{equation*}
    with confidence $1-\alpha$, the number of price levels $k = |\priceset|$ and maximum possible price $\overline{p} = \max_{p}|\priceset|$.
    For target regret level $\Rthresh$, 
    \begin{itemize}
        \item pass if $\EstUCB\leq 2\Rthresh$; and
        \item fail, otherwise.
    \end{itemize}
\end{definition}

The accuracy of the propensity score regret estimator depends on how often the seller's algorithm explores. The estimation is accurate only when the algorithm explores often enough so that enough information is revealed. The upper margin of error term $\delta^T$ is added to the estimated calibrated regret in order to account for the error given the exploration of the seller's algorithm. This ensures that when the transcript fails to demonstrate that the algorithm producing it conducted enough exploration, it is hard for the seller to pass the test. Hence, a seller with high plausible calibrated regret can not pass the test for getting a low estimated plausible calibrated regret when the estimator is actually unreliable.

\begin{theorem}
   The empirical propensity score test has sample complexity
   \begin{equation}
       \Tthresh=\log\frac{2k^2}{\alpha}\cdot 2\left(\frac{k\pmax}{\epsilon}\right)^2\cdot \left(\frac{1}{\pimin} + 1\right)^2 
   \end{equation}
   with minimum exploration requirement $\underline{\pi}$, confidence $1-\alpha$, and  target regret level $\Rthresh$, 
   where $k = |\priceset|$ is the number of price levels, and $\overline{p} = \max_{p}|\priceset|$ is the maximum possible price. \label{thm:pscore-sample-complexity}
\end{theorem}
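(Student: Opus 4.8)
The plan is to establish the two required implications in the definition of sample complexity with auditability requirement, using a single concentration inequality as the common engine. The auditability requirement $C$ will be that the transcript has minimum exploration probability at least $\pimin$; so $\min_p \pi^s(p) \geq \pimin$ for all $s \leq T$, which lets us bound $\delta^T$ above by a quantity of order $\frac{k\pmax}{\sqrt T}\sqrt{\log(2k^2/\alpha)}\,(1/\pimin + 1)$. Setting this upper bound equal to $\Rthresh = \epsilon$ and solving for $T$ yields exactly the claimed $\Tthresh$. So the real content is showing that, at this value of $T$, (i) a seller with true calibrated regret $R^T \leq \Rthresh$ and a transcript satisfying $C$ passes with probability $\geq 1-\alpha$, and (ii) a seller with plausible calibrated regret $\PlEr \geq 2\Rthresh$ fails with probability $\geq 1-\alpha$.

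First I would prove the key concentration lemma: for a fixed cost $c$ and fixed remapping $\sigma$, the estimated calibrated regret $\EstEr{\sigma}{c}$ is an unbiased estimator of $\Er{\sigma}{c}$ (this follows because $\tilde{x}^t(p)$ is unbiased for $x^t(p)$, as noted in the text, and the $\sigma$-term and the identity-term are handled separately), and then apply Hoeffding's inequality over the $T$ independent draws $p^t \sim \pi^t$. The per-round summand $(\sigma(p^t)-c)\tilde{x}^t(\sigma(p^t)) - (p^t-c)\tilde{x}^t(p^t)$ is bounded in magnitude by roughly $\pmax(1/\min_p\pi^t(p) + 1)$, since $\tilde{x}^t$ takes value at most $1/\pi^t(p^t) \leq 1/\min_p \pi^t(p)$ on its support and the $(p^t - c)$ factor is at most $\pmax$. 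This gives $\Pr[|\EstEr{\sigma}{c} - \Er{\sigma}{c}| \geq \delta^T] \leq \alpha/k^2$ for each fixed $(\sigma, c)$-pair after plugging the range bounds into Hoeffding; I would then union-bound over the relevant finite set. The subtlety is that $c$ ranges over the continuum $\costrange$, but because $\Er{\sigma}{c}$ and $\EstEr{\sigma}{c}$ are both affine (indeed, the regret is linear in $c$ after the $x$-terms are fixed — note $(\sigma(p)-c)x(\sigma(p)) - (p-c)x(p) = (\sigma(p)-p)\cdot(\text{stuff}) - c\,(x(\sigma(p)) - x(p))$), the max over $\sigma$ of a family of affine functions of $c$ is convex and piecewise-linear, so its minimizer and the whole analysis reduce to finitely many candidate $(\sigma,c)$ configurations; alternatively one argues that $\max_\sigma \EstEr{\sigma}{c}$ concentrates uniformly in $c$ because it is $L$-Lipschitz in $c$ with a controlled constant and one takes a fine net. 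This is where I expect to spend the most care — getting the union bound to come out with exactly the $2k^2$ inside the log as in $\delta^T$, presumably by union-bounding over the $k^k$ (or effectively $k^2$, if one only needs single-coordinate swaps for calibrated/swap regret) many $\sigma$'s times a factor of $2$ for the two-sided bound, and arguing the $c$-dependence is free.

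Given the concentration lemma, the two implications are short. For (i): if $R^T = \max_\sigma \Er{\sigma}{c_0} \leq \Rthresh$ for the true cost $c_0 \in \costrange$, then on the high-probability event, $\max_\sigma \EstEr{\sigma}{c_0} \leq \max_\sigma \Er{\sigma}{c_0} + \delta^T \leq \Rthresh + \delta^T$, hence $\EstPlEr = \min_c \max_\sigma \EstEr{\sigma}{c} \leq \Rthresh + \delta^T$ and $\EstUCB = \EstPlEr + \delta^T \leq \Rthresh + 2\delta^T \leq 2\Rthresh$, using $\delta^T \leq \Rthresh/2$ at $T = \Tthresh$ — wait, I need $\delta^T \le \Rthresh/2$, which I should arrange by the choice of constant in $\Tthresh$; since the stated $\Tthresh$ makes $\delta^T = \Rthresh$ I would instead note $\EstUCB \le \EstPlEr + \delta^T \le \Rthresh + \delta^T \le 2\Rthresh$ directly, which only needs $\delta^T \le \Rthresh$, matching the displayed $\Tthresh$ exactly. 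So the seller passes. For (ii): if $\PlEr = \min_c \max_\sigma \Er{\sigma}{c} \geq 2\Rthresh$, then for every $c$, $\max_\sigma \EstEr{\sigma}{c} \geq \max_\sigma \Er{\sigma}{c} - \delta^T \geq 2\Rthresh - \delta^T$ on the good event, so $\EstPlEr \geq 2\Rthresh - \delta^T$ and $\EstUCB = \EstPlEr + \delta^T \geq 2\Rthresh > 2\Rthresh$... that last inequality is not strict, so I would be slightly more careful: the definition fails the test when $\EstUCB > 2\Rthresh$, and $\EstUCB \ge 2\Rthresh$ is what I get; I would fix this by a strict version of the concentration bound or by noting the test condition should be read with the appropriate open/closed convention, or by absorbing a factor-$2$ slack (the definition already has a factor $2$ gap between $\Rthresh$ and $2\Rthresh$, so tightening $\delta^T$ by any constant $<1$ fraction gives strictness). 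Handling these boundary conventions cleanly, together with the uniform-in-$c$ union bound, are the only genuinely fiddly points; everything else is Hoeffding plus bookkeeping.
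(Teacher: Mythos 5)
Your overall architecture (two-sided concentration of the estimated plausible regret, an error margin $\delta^T$ bounded via the minimum-exploration requirement, then the two implications using the minimality of $\estplcost$ for the pass direction) matches the paper, but two steps in your plan would fail as stated. First, the concentration engine: you propose ``Hoeffding's inequality over the $T$ independent draws $p^t\sim\pi^t$,'' but the rounds are not independent --- $\pi^t$ is computed from the realized history $p^1,\dots,p^{t-1}$, and the environment may choose $x^t$ adaptively as well, so the per-round deviations are only a martingale-difference sequence with respect to the natural filtration (unbiasedness of $\tilde x^t$ holds \emph{conditionally} on $\mathcal F_{t-1}$). The paper therefore uses Azuma's inequality with the same per-round bound you compute, $d_t=\frac{1}{T}(1/\min_p\pi^t(p)+1)\pmax$; plain Hoeffding for independent summands does not apply. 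Relatedly, the $2k^2$ inside the logarithm is not obtained by union-bounding over remappings: the paper first decomposes $\max_\sigma\Er{\sigma}{c}=\sum_{p\in\priceset}\max_{p'}R^T_{p,p'}(c)$ (and likewise for the estimate), so one only needs each pairwise deviation $\Delta R^T_{p,p'}(c)$ to be small at level $\delta/k$ (``some term of a sum of $k$ exceeds the average''), giving a union bound over $k^2$ pairs and producing the extra factor $k$ in $k\pmax/\epsilon$. Your hedge toward ``single-coordinate swaps'' points in this direction, but a union bound over all $k^k$ remappings, your primary suggestion, would give $k\log k$ in place of $\log k^2$ and not the stated $\Tthresh$.

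Second, the uniformity over the cost continuum, which you correctly identify as the delicate point, is where your plan is not yet a proof. Only the fail direction needs it (your pass direction correctly uses only the fixed true cost $c_0$), and a net/Lipschitz argument would degrade the constants and add log factors, while ``finitely many candidate $(\sigma,c)$ configurations'' is data-dependent and unquantified. The paper's resolution is sharper: each pairwise deviation $\Delta R^T_{p,p'}(c)$ is affine in $c$, so its minimum over $c\in\costrange$ is attained at an endpoint, and it suffices to control the deviations at $\ubar{c}$ and $\overline{c}$ only --- that endpoint doubling is exactly the factor $2$ in $2k^2$. Finally, your patched constant argument in implication (i) is incorrect: on the good event you only get $\EstPlEr\leq\Rthresh+\delta^T$, hence $\EstUCB\leq\Rthresh+2\delta^T$, not $\Rthresh+\delta^T$; so you do need $\delta^T\leq\Rthresh/2$, i.e.\ the threshold must be set so the margin is half the target regret (the paper's appendix works with $\epsilon/2$ for exactly this reason), and the stated $\Tthresh$ is only consistent with the conclusion up to this constant factor. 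None of these issues is conceptually deep, but as written the proposal's concentration step and its uniform-in-$c$ step would not go through, and they are precisely the content of the paper's proof.
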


As discussed above, to be able to pass the empirical propensity score test, the seller's algorithm needs to explore often enough so that the transcript satisfies the minimum exploration requirement. 
The transcripts produced by an algorithm that does not explore often enough are not auditable using the empirical propensity score test even if the algorithms are actually non-collusive.

As long as an algorithm is robust enough in an environment, it can be modified to produce auditable transcripts in the same environment by mixing it with a small probability of uniform sampling of all the prices. Running the modification for a few more rounds will have roughly the same performance as the original one. An algorithm is robust in its operating environment if it regret is approximately preserved when the algorithm skips some rounds, as long as these rounds are randomly drawn independently from the algorithm and the environment.

\begin{definition}
    An algorithm $A$ is \emph{blackout robust in an environment} if the following holds: If running $A$ in the environment for any time horizon at least $T$ has regret at most $\Rthresh_T$,
    then for any time horizon $T' \geq T$,
    the regret of running $A$ on an independently selected subset of the $T'$ rounds with length at least $T$ is no greater than $\Rthresh_T$.
\end{definition}



\begin{theorem}
 Let $k = |\priceset|$ be the number of price levels and assume that the maximum possible price is normalized to $1$. Given any algorithm $A$, for minimum exploration requirement $\pimin$, consider the algorithm $\hat{A}$, which at each round $t$:
    \begin{itemize}
        \item w.p. $k\pimin$, output $p$ drawn uniformly from $\priceset$ 
        \item w.p. $1-k\pimin$, call $A$ with the inputs and output its output .\label{item:A-run}
   \end{itemize}
    Then,
    \begin{itemize}
        \item The distribution $\pi^t$ produced by algorithm $\hat{A}$ has minimum exploration probability at least $\pimin$.
        \item If $A$ is blackout robust in the environment and has regret at most $\Rthresh_T$ for any time horizon at least $T$, the regret of running $\hat{A}$ in the same environment until $A$ is called at least $T$ times is no greater than $\Rthresh_T + k \pimin$, and the expected number of rounds it takes is $T/(1-k \pimin)$.
    \end{itemize}  
    \label{thm:unif-exp-augment}
\end{theorem}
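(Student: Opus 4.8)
The plan is to dispatch the three assertions in order; the first two are immediate and the third---the regret bound---carries the real content. For the minimum-exploration claim I would unfold the construction of $\hat A$: the distribution it outputs in round $t$ is the mixture that is uniform on $\priceset$ with probability $k\pimin$ and equal to $A$'s own output $\pi_A^t$ with probability $1-k\pimin$, so every price $p$ receives mass $k\pimin\cdot(1/k)+(1-k\pimin)\,\pi_A^t(p)\ge\pimin$. For the expected running time I would note that the indicator that ``$A$ is called in round $t$'' is a fresh $\mathrm{Bernoulli}(1-k\pimin)$ trial, independent across rounds, so the number $N$ of rounds until the $T$-th call is negative binomial with mean $T/(1-k\pimin)$.

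The substantive step is the regret bound, and the key device is to split the $N$ elapsed rounds into the set $S$ of rounds on which $A$ was actually called (so $|S|=T$) and the complementary ``blackout'' set $\bar S$ on which $\hat A$ posted a uniformly random price. Since $\hat A$ forwards to $A$ exactly the rounds in $S$, in order, the internal trace of $A$ is precisely a run of $A$ of length $T$ on the subsequence indexed by $S$; and since the call/skip coins are drawn independently of the environment and of $A$'s own randomness, $S$ is an independently selected subset of the $N$ rounds. Blackout robustness of $A$, instantiated with $T'=N\ge T$, then bounds the max-over-$\sigma$ average calibrated regret accumulated on $S$ by $\Rthresh_T$. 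On $\bar S$ each round contributes at most the largest attainable payoff $\pmax=1$ (by the price normalization) to the deviation benefit against any remapping $\sigma$. Combining, for each fixed $\sigma$ the calibrated-regret average over all $N$ rounds is at most $\tfrac TN\Rthresh_T+\tfrac{N-T}{N}$; taking the maximum over $\sigma$ and then the expectation over the coins and the environment, I would bound the first term by $\Rthresh_T$ (using $T/N\le1$ and $\Rthresh_T\ge0$) and the second by Jensen's inequality, $\mathbb{E}[(N-T)/N]=1-\mathbb{E}[T/N]\le1-T/\mathbb{E}[N]=k\pimin$, which yields the claimed bound $\Rthresh_T+k\pimin$.

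The part that needs care rather than calculation is the justification that the embedded run of $A$ really is ``a run of $A$ on an independently selected subset'', so that blackout robustness is applicable. I would make explicit that (i) the selection of $S$ is independent of the environment's and of $A$'s randomness---it is driven by fresh coins---even though $S$ is of course correlated with the realized demand functions through the posted prices; (ii) the environment $A$ faces on $S$ is the same underlying environment, merely with $\hat A$'s exploration prices interleaved on $\bar S$, which is exactly the configuration the blackout-robust definition is set up to cover; and (iii) the regret in the conclusion is understood in expectation---the $\tfrac{N-T}{N}$ term forces this, since $N$ is unbounded---and the $1/N$ normalization is the random reciprocal of the actual transcript length, which is precisely what makes the Jensen step both necessary and legitimate.
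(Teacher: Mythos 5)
Your proposal is correct and follows essentially the same route as the paper's proof: decompose the $N$ elapsed rounds into the rounds forwarded to $A$ (handled by blackout robustness) and the uniform-exploration rounds (each contributing at most the normalized maximum price $1$), then use the negative-binomial stopping time for the expected horizon. Your explicit Jensen step, $\mathbb{E}[(N-T)/N]\le 1-T/\mathbb{E}[N]=k\pimin$, is in fact a slightly more careful justification of the bound the paper asserts for the exploration term.
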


\begin{corollary}[Audit Compatibility]
b  Let $k=|\priceset|$ be the number of price levels and assume that the maximum possible price is normalized to $1$. The empirical propensity score test with minimum exploration requirement $\pimin$ is $(\eta,r)$-audit compatible for any blackout robust algorithms and environment where 
  $\eta = k\pimin /(1-k\pimin)$ and  $r= k\pimin$.
  \label{cor:audit-compatibility}
\end{corollary}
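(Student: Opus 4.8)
The plan is to derive \Cref{cor:audit-compatibility} essentially as a rephrasing of \Cref{thm:unif-exp-augment}, once we pin down what the auditability requirement $C$ of the empirical propensity score test actually is. Inspecting the test together with \Cref{thm:pscore-sample-complexity}, the only structural feature of the transcript that the guarantees depend on is that its minimum exploration probability satisfies $\underline{\pi}^T \geq \pimin$: this is exactly what keeps the margin of error $\delta^T$ small and what the sample-complexity bound is stated with respect to. So I would formally take $C$ to be the event $\{\underline{\pi}^T \geq \pimin\}$, at which point the claim to be proved is: for every blackout robust algorithm $A$ whose expected calibrated regret is at most $\Rthresh$ at every horizon at least $T$, there is an augmentation of $A$ whose transcript lies in $C$, whose expected calibrated regret is at most $\Rthresh + k\pimin$, and whose (random) horizon has expectation $T/(1-k\pimin)$.

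First I would take the augmentation to be the algorithm $\hat{A}$ of \Cref{thm:unif-exp-augment}, run until its inner call to $A$ has occurred at least $T$ times. By the first conclusion of that theorem, every per-round distribution $\pi^t$ output by $\hat{A}$ places mass at least $\pimin$ on each of the $k$ prices, so $\underline{\pi}^T \geq \pimin$ with probability one and the transcript satisfies $C$. Next, because $A$ has expected calibrated regret at most $\Rthresh$ at every horizon at least $T$, the horizon-dependent bound $\Rthresh_T$ appearing in \Cref{thm:unif-exp-augment} may be taken to be the constant $\Rthresh$; the second conclusion of that theorem — which is where blackout robustness of $A$ enters, via the observation that the rounds on which $A$ is actually invoked form an independently selected subset of the realized rounds — then yields that $\hat{A}$ has expected calibrated regret at most $\Rthresh + k\pimin$ and that the expected number of rounds is $T/(1-k\pimin)$.

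The last step is arithmetic bookkeeping against the $(\eta,r)$-audit compatibility definition. With $\eta = k\pimin/(1-k\pimin)$ we have $1 + \eta = (1-k\pimin+k\pimin)/(1-k\pimin) = 1/(1-k\pimin)$, so the expected horizon $T/(1-k\pimin)$ is precisely $(1+\eta)T$; and the regret overhead $k\pimin$ is precisely $r$. Hence the empirical propensity score test with minimum exploration requirement $\pimin$ is $(\eta,r)$-audit compatible for blackout robust algorithms, as claimed.

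Since the technical weight has been front-loaded into \Cref{thm:unif-exp-augment} and \Cref{thm:pscore-sample-complexity}, the main thing requiring care is not a calculation but a modeling point made in the first paragraph: one must be explicit that ``$\underline{\pi}^T \geq \pimin$'' is the \emph{entire} auditability requirement $C$, i.e. that no other property of the transcript is invoked in either branch of the test's pass/fail guarantee, and, relatedly, that the phrase ``for a time horizon with expected length $(1+\eta_C)T$'' in the definition is intended to cover a random stopping time — which is exactly the form the horizon of $\hat{A}$ takes. I expect that reconciling these definitional choices, rather than any inequality, is the only place the argument could be contested.
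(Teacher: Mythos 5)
Your proposal is correct and matches the paper's own argument, which is simply to combine Theorem \ref{thm:unif-exp-augment}, Theorem \ref{thm:pscore-sample-complexity}, and the definition of audit compatibility; your write-up spells out the same combination in detail, including the identification of the auditability requirement with the minimum exploration condition and the arithmetic $T/(1-k\pimin)=(1+\eta)T$. No gaps.
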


Good algorithms in various environments are automatically blackout robust, and thus can be modified to pass the empirical propensity score test.

\begin{lemma}
    Any no-regret algorithm in a stochastic environment is blackout robust in the environment.
    \label{lem:stoch-robust}
\end{lemma}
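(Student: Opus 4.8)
The plan is to reduce blackout robustness in a stochastic environment to the elementary fact that an independently chosen sub-sequence of an i.i.d.\ sequence is again i.i.d.\ with the same marginal law. Fix the cost $c$ and a horizon $T$, and suppose, as in the hypothesis of blackout robustness, that running $A$ in the given stochastic environment for every time horizon $m \geq T$ has regret at most $\Rthresh_T$. (For a no-regret algorithm one may take $\Rthresh_T = \sup_{m\ge T}\max_\sigma \ExpRegret^m(\sigma,c)$, which is finite and tends to $0$ as $T\to\infty$; but the argument below only uses the stated hypothesis, so the same reduction in fact certifies blackout robustness for any online algorithm in a stochastic environment.)

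Now fix any $T' \geq T$ and any subset $S \subseteq \{1,\dots,T'\}$ selected independently of the environment's draws and of $A$'s internal randomness, and condition on the event $S = \{s_1 < \dots < s_m\}$ with $m \geq T$. Since the environment is stochastic, $x^1,\dots,x^{T'}$ are i.i.d., and because the choice of $S$ uses no information about them, conditioned on $S$ the sub-sequence $(x^{s_1},\dots,x^{s_m})$ is i.i.d.\ with exactly the per-round demand distribution of the environment, while $A$'s coin flips remain independent of this sub-sequence and of $S$. When $A$ is run on $S$ — i.e., invoked only on the rounds of $S$, seeing those demands in order and paused on the remaining ``blackout'' rounds — its sequence of price distributions $\{\pi^t\}$, the prices drawn, and hence its regret over the rounds of $S$ are a measurable function of $A$'s coins together with this i.i.d.\ demand sub-sequence: the same joint law as running $A$ from scratch in the stochastic environment for $m$ rounds. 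Hence the regret of $A$ on $S$ is equal in distribution to the regret of running $A$ for $m$ rounds, which by the hypothesis (as $m \geq T$) is at most $\Rthresh_T$; whether ``regret at most $\Rthresh_T$'' is read as an expectation, a high-probability, or a worst-case statement, it transfers because the two runs are equal in law. Since this holds for every realization of $S$ with $|S| \geq T$, it holds unconditionally, which is exactly the definition of blackout robustness.

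The only substantive step is the ``same joint law'' claim, so the care is all in the independence bookkeeping: one must realize the subset-selection randomness, the environment randomness, and $A$'s randomness on a common probability space as mutually independent, so that conditioning on $S$ keeps the selected demand sub-sequence i.i.d.\ and keeps $A$'s coins untouched, and then observe that $A$'s transcript on $S$ depends on the world only through $A$'s coins and those $m$ demand functions. One minor caveat worth stating: $A$ should be an anytime online algorithm that does not require its horizon in advance (as is standard for no-regret learners), since on the subset it is effectively run for the data-dependent length $m$ rather than for $T'$ rounds.
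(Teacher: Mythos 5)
Your argument is correct and is essentially the paper's own proof: the paper likewise observes that in a stochastic environment the demands are i.i.d., so an independently selected subset of at least $T$ rounds is again an i.i.d.\ sequence from the same distribution, and therefore the regret bound $\Rthresh_T$ carries over. Your write-up simply makes explicit the coupling/equal-in-law bookkeeping that the paper's two-sentence proof leaves implicit.
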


\begin{lemma}
    Any no-regret algorithm in an adversarial environment is blackout robust in the environment.
    \label{lem:advers-robust}
\end{lemma}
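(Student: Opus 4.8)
The plan is to follow the same template as the proof of \Cref{lem:stoch-robust}, but with the structural fact ``any sub-sequence of an adversarial sequence of demand functions is again an adversarial sequence'' in place of ``any sub-sample of an i.i.d.\ sequence is again i.i.d.'' Concretely, in an adversarial environment a no-regret algorithm $A$ has calibrated expected regret bounded by a rate $\Rthresh_\bullet$ that depends only on the number of rounds and not on which demand functions are presented; this is exactly the content of the no-regret theorem of \citet{blum2007external,stoltz2005incomplete}, applied against an arbitrary sequence. The blackout-robust condition then amounts to checking that when $A$ is run on a sub-collection of $m \ge T$ rounds, the sequence of demand functions it actually faces there is still ``just some sequence'', so the same rate applies at horizon $m$ and hence is at most $\Rthresh_T$.

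First I would fix a realization of the exogenous, $A$- and environment-independent random subset $S=\{s_1<\dots<s_m\}\subseteq\{1,\dots,T'\}$, with $m=|S|\ge T$. Running $A$ on $S$ invokes $A$ a total of $m$ times; on its $j$-th invocation $A$ outputs $\pi^j\in\distover{\priceset}$, posts $p^j\sim\pi^j$, and observes $x^{s_j}(\cdot)$ and the resulting payoff, so the quantity to bound is $\max_\sigma\tfrac{1}{m}\sum_{j=1}^{m}\EExp{p^j\sim\pi^j}{u(\sigma(p^j),x^{s_j})-u(p^j,x^{s_j})}$, the calibrated expected regret of $A$ over exactly these $m$ rounds. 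From $A$'s internal state this execution is a length-$m$ run of $A$ against the demand-function sequence $(x^{s_1},\dots,x^{s_m})$. Invoking the no-regret property against this sequence bounds the above by the no-regret rate at horizon $m$, which, since $m\ge T$, does not exceed the bound $\Rthresh_T$ that holds for all horizons at least $T$. Because this inequality holds for every realization of $S$ with $|S|\ge T$, it holds in expectation over $S$ as well, and that is the definition of $A$ being blackout robust in the environment.

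The hard part is the step ``$A$ sees just some sequence $(x^{s_1},\dots,x^{s_m})$'': the demand function $x^{s_j}$ may depend not only on $A$'s earlier plays $\pi^1,\dots,\pi^{j-1},p^1,\dots,p^{j-1}$ but also on whatever is played during the blackout rounds strictly between $s_{j-1}$ and $s_j$ --- for instance the uniform exploration inserted by $\hat A$ in \Cref{thm:unif-exp-augment}. One therefore cannot pre-commit the sequence $(x^{s_1},\dots,x^{s_m})$ before running $A$; instead one must use the no-regret guarantee in its fully adaptive form, i.e.\ that $\max_\sigma\ExpRegret^{m}(\sigma,c)$ is controlled no matter how the demand functions are generated, including adaptively in response to $A$'s plays and using side information exogenous to $A$. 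This is precisely the meaning of ``no-regret in an adversarial environment'', so once this is granted the transfer is immediate; formally one conditions on the entire blackout transcript together with $S$, after which the only remaining randomness is $A$'s internal coins and the environment's adaptive choices, which is exactly the regime covered by the adversarial calibrated no-regret theorem applied at horizon $m \ge T$.
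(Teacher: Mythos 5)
Your proposal is correct and is essentially the paper's argument: the paper's own proof for the adversarial case is the single observation that an adversarial no-regret bound holds against any sequence of demand functions, hence in particular against the sequence encountered on any independently chosen subset of at least $T$ rounds, which is exactly your core step. The additional care you take (fixing a realization of the subset, treating the blackout-round plays as part of the adaptive adversary, and averaging over the subset at the end) just fills in details the paper leaves implicit.
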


As we discussed in the Section \ref{sec:framework}, calibrated vanishing regret cannot be directly observed in the data of a learning algorithm from a finite horizon. The empirical propensity score test introduced in this section checks an approximation of the calibrated vanishing regret with data observed from a finite horizon of length $T$. 
We conclude this section by showing that the limiting behavior of the test as $T$ goes to infinity is consistent with checking whether plausible calibrated regret vanishes. 

The empirical propensity score test checks whether the upper confidence bound of the estimated plausible calibrated regret is below the target regret level, where the upper confidence bound of the estimated plausible calibrated regret is the sum of the plausible calibrated regret estimator $\EstPlEr=\max_{\sigma}\EstEr{\sigma}{\estplcost}$ and the error margin $\delta^T$.
To show the limiting behavior of the test, we establish the asymptotic consistency of the plausible calibrated regret estimator with the error margin: When the algorithm explores reasonably enough, if the true calibrated regret vanishes as $T$ goes to infinity, then so does the estimated plausible calibrated regret. On the other hand, when the regulator sets an increasing sequence of confidence levels that goes to one as $T$ goes to infinity and grows slowly enough, if the plausible calibrated regret does not vanish, then the upper confidence bound does not vanish. 
\begin{lemma}[Upper-bound consistency]
     Suppose the distributions $\{\pi^t\}_{t=1}^T$ produced by seller's algorithm satisfies
     \begin{equation}
     \sum_{t=1}^T\left(\frac{1}{\min_{p}\pi^t(p)}+1\right)^2 = o\left(T^2/\log(T)\right),\label{eqn:upper-consistent-cond}
     \end{equation} 
   then $\lim_{T\to \infty}\TrueEr\leq 0$ implies $\lim_{T\to \infty}\EstPlEr \leq 0$ almost surely.
   \label{lem:upper-consistency}
\end{lemma}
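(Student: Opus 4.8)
The plan is to reduce the lemma to a concentration bound showing that the propensity-score regret estimator $\EstEr{\sigma}{c}$ is, uniformly over $\sigma$ and over $c \in \costrange$, close to the true expected calibrated regret $\Er{\sigma}{c}$, with an error that vanishes under condition \eqref{eqn:upper-consistent-cond}. Once we have such a uniform closeness statement holding almost surely in the limit, the implication is immediate: if $\lim_T \TrueEr = \lim_T \max_\sigma \Er{\sigma}{c_0} \leq 0$ for the true cost $c_0 \in \costrange$, then $\max_\sigma \EstEr{\sigma}{c_0}$ also has nonpositive limit (up to the vanishing error), and hence $\EstPlEr = \min_{c \in \costrange}\max_\sigma \EstEr{\sigma}{c} \leq \max_\sigma \EstEr{\sigma}{c_0}$ has nonpositive limit too. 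So the whole content is the uniform concentration plus a Borel–Cantelli argument to upgrade ``with high probability'' to ``almost surely.''

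First I would fix $\sigma$ and $c$ and write $\EstEr{\sigma}{c} - \Er{\sigma}{c} = \frac{1}{T}\sum_{t=1}^T Z^t$ where $Z^t = \EExp{p^t\sim\pi^t}{\tilde u^t(\sigma(p^t)) - \tilde u^t(p^t)} - \EExp{p^t\sim\pi^t}{u^t(\sigma(p^t)) - u^t(p^t)}$. Since $\tilde x^t$ is an unbiased estimator of $x^t$ conditioned on the history, each $Z^t$ is a martingale difference with respect to the filtration generated by the algorithm's randomness through round $t$. Using the normalization $\pmax$ on prices and the fact that $\tilde x^t(p) \leq 1/\min_p \pi^t(p)$, one bounds $|Z^t| \leq \pmax\big(\tfrac{1}{\min_p \pi^t(p)} + 1\big)$ (this is exactly the quantity appearing in $\delta^T$). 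Then Azuma–Hoeffding gives, for each fixed $(\sigma,c)$, that $|\EstEr{\sigma}{c} - \Er{\sigma}{c}|$ exceeds $\tfrac{1}{T}\sqrt{2\log(2/\alpha)\sum_{s}(\tfrac{1}{\min_p\pi^s(p)}+1)^2}\cdot\pmax$ with probability at most $\alpha$. A union bound over the $k^k$ remappings $\sigma$ — or, more cleverly, the observation (standard for swap/calibrated regret) that it suffices to control the $k^2$ elementary swaps, which is why the bound in $\delta^T$ carries $\log(2k^2/\alpha)$ rather than $\log(2k^k/\alpha)$ — and a further observation that the dependence on $c$ is affine so the sup over $c\in\costrange$ is attained at an endpoint, yields a uniform bound of the form $\sup_{\sigma}\sup_{c\in\costrange}|\EstEr{\sigma}{c} - \Er{\sigma}{c}| \leq \delta^T$ with probability $\geq 1-\alpha$.

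Next I would plug in condition \eqref{eqn:upper-consistent-cond}: since $\sum_{t=1}^T(\tfrac{1}{\min_p\pi^t(p)}+1)^2 = o(T^2/\log T)$, we have $\delta^T \cdot \tfrac{1}{\text{const}}\sim \tfrac{1}{T}\sqrt{\log(T)\cdot o(T^2/\log T)} = o(1)$ once we let $\alpha = \alpha^T$ decay polynomially in $T$ (say $\alpha^T = 1/T^2$, which only costs a $\log T$ factor inside the square root and is absorbed by the $o(T^2/\log T)$). With $\sum_T \alpha^T < \infty$, Borel–Cantelli gives that almost surely, for all but finitely many $T$, $\sup_{\sigma,c}|\EstEr{\sigma}{c}-\Er{\sigma}{c}| \leq \delta^T \to 0$. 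Combining: $\limsup_T \EstPlEr \leq \limsup_T \max_\sigma \EstEr{\sigma}{c_0} \leq \limsup_T(\max_\sigma \Er{\sigma}{c_0} + \delta^T) = \limsup_T \TrueEr + 0 \leq 0$ almost surely.

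The main obstacle I anticipate is not the martingale concentration itself but getting the uniformity over $\sigma$ to come out with only a $\log(k^2)$ (rather than $\log(k^k)$) dependence, so that the resulting error margin matches the $\delta^T$ actually used by the test. This requires the structural reduction of calibrated (swap) regret to a sum/maximum over elementary per-price deviations — controlling, for each price level $p$ and each target $q$, the process $\frac{1}{T}\sum_t \pi^t(p)\big((q-c)\tilde x^t(q) - (p-c)\tilde x^t(p)\big)$ versus its true counterpart — and then recombining. One must be careful that the per-$(p,q)$ estimator errors are each martingale differences with the stated bound, and that the worst-case $\sigma$ is reconstructed from the per-price worst deviations without blowing up the union bound. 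A secondary, more routine point is verifying that the affine-in-$c$ structure really does let us replace $\sup_{c\in\costrange}$ by a maximum over the two endpoints, so that no continuum union bound is needed.
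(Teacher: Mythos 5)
Your proposal is correct and follows essentially the same route as the paper: per-pair martingale (Azuma) concentration for the elementary swap processes $\tilde{R}^T_{p,p'}(c)$ versus $R^T_{p,p'}(c)$, the sum-of-per-price-maxima decomposition of calibrated regret to get a union bound over $k^2$ pairs, the observation that $\EstPlEr \leq \max_\sigma\EstEr{\sigma}{c_0}$ for the true cost $c_0$, and a Borel--Cantelli argument under condition \eqref{eqn:upper-consistent-cond} to upgrade to an almost-sure statement. The only (harmless) difference is that your uniformity over $c\in\costrange$ via the affine/endpoint argument is not needed for this direction --- the paper uses it only for the lower-tail bound --- since the fixed-$c_0$ upper-tail bound already suffices here.
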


\begin{lemma}[Lower-bound consistency with error margin]
 Suppose the regulator chooses a vanishing sequence of $\{\alpha^T\}_T$ satisfying $\alpha^T = \Theta(T^{-2})$, if $\lim_{T\to \infty}\PlEr > 0$,
  then $\lim_{T\to \infty}\left(\EstPlEr + \delta^T\right) > 0$ almost surely.\label{lem:lower-consistency}
\end{lemma}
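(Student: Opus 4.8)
The plan is to show that $\delta^T$ is, up to absolute constants, a valid \emph{one-sided} $(1-\alpha^T)$-confidence radius for how far the estimator $\EstPlEr$ can fall below the true plausible calibrated regret $\PlEr$, and then to promote this per-$T$ high-probability bound to an almost-sure statement via the first Borel--Cantelli lemma, using $\sum_T\alpha^T=\sum_T\Theta(T^{-2})<\infty$. Concretely, the key intermediate claim is that for every $T$,
\[
  \Pr\!\left[\,\EstPlEr \ge \PlEr - \delta^T\,\right] \ge 1-\alpha^T .
\]
This is exactly the one-sided half of the confidence-radius property that motivates the definition of $\delta^T$ and underlies \Cref{thm:pscore-sample-complexity}; for the present lemma we do not even need the matching upper bound.

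To see that this intermediate claim reduces to a single simultaneous concentration event, let $\estplcost\in\argmin_{c\in\costrange}\max_\sigma\EstEr{\sigma}{c}$ be the (random) estimated plausible cost, so $\EstPlEr=\max_\sigma\EstEr{\sigma}{\estplcost}$, and let $\sigma^\star\in\argmax_\sigma\Er{\sigma}{\estplcost}$. On the event that $\EstEr{\sigma}{c}\ge\Er{\sigma}{c}-\delta^T$ holds \emph{simultaneously} for every remap $\sigma$ and every $c\in\costrange$, one has
\[
  \EstPlEr \;\ge\; \EstEr{\sigma^\star}{\estplcost} \;\ge\; \Er{\sigma^\star}{\estplcost}-\delta^T \;=\; \max_\sigma\Er{\sigma}{\estplcost}-\delta^T \;\ge\; \min_{c\in\costrange}\max_\sigma\Er{\sigma}{c}-\delta^T \;=\; \PlEr-\delta^T .
\]
So it suffices that this simultaneous ``good'' event has probability at least $1-\alpha^T$. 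For that I would not union-bound over the $k^k$ remaps and the continuum of costs, but instead exploit structure. \textbf{(i)} Since the estimator is built from the propensity-score estimators $\tilde x^t$, which are conditionally unbiased for $x^t$, the difference $\EstEr{\sigma}{c}-\Er{\sigma}{c}$ is a normalized sum of martingale differences whose round-$t$ increment is bounded in magnitude by $O\!\bigl(\pmax(1/\min_p\pi^t(p)+1)\bigr)$; this bound is moreover \emph{predictable}, because $\pi^t$ is chosen before round $t$. A self-normalized (data-dependent variance) Azuma--Hoeffding bound then controls the deviation in terms of $\sum_{s\le T}(1/\min_p\pi^s(p)+1)^2$, precisely the quantity under the square root in $\delta^T$. \textbf{(ii)} Both $\max_\sigma\Er{\sigma}{c}$ and $\max_\sigma\EstEr{\sigma}{c}$ decompose additively over the ``source'' price, so it is enough to control the $O(k^2)$ atomic swap components $p\mapsto p'$ (plus the single $\sigma$-independent term); and each such component is \emph{affine} in $c$, hence the supremum of its error over $\costrange$ is attained at an endpoint $\ubar c$ or $\bar c$. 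A union bound over these $O(k^2)$ components and the two endpoints, each at level $O(\alpha^T/k^2)$, reproduces the $\log(2k^2/\alpha^T)$ factor in $\delta^T$ and establishes the intermediate claim.

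Finally, Borel--Cantelli closes the argument: since $\sum_T\alpha^T<\infty$, almost surely there is a random $T_1$ with $\EstPlEr+\delta^T\ge\PlEr$ for all $T\ge T_1$. If $\lim_{T\to\infty}\PlEr>0$, fix $\beta>0$ and $T_0$ with $\PlEr\ge\beta$ for all $T\ge T_0$; then almost surely $\EstPlEr+\delta^T\ge\beta>0$ for all $T\ge\max(T_0,T_1)$, so $\liminf_{T\to\infty}(\EstPlEr+\delta^T)\ge\beta>0$, which is the claim. Note that we never use $\delta^T\to0$, so---unlike the companion \Cref{lem:upper-consistency}---no exploration condition such as \eqref{eqn:upper-consistent-cond} is needed here. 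I expect the main obstacle to be point \textbf{(i)}: getting a clean concentration inequality when the per-round magnitude bound, and hence the variance proxy appearing inside $\delta^T$, is itself random (predictable) rather than a fixed constant---this forces a self-normalized martingale inequality (or a peeling argument over dyadic ranges of $\sum_s(1/\min_p\pi^s(p)+1)^2$)---combined with the care needed to make the bound uniform over the data-dependent cost $\estplcost$, which is exactly what the affine-in-$c$ observation in \textbf{(ii)} buys us.
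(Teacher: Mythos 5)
Your proposal matches the paper's own argument: the paper proves precisely your intermediate claim $\Pr\left[\EstPlEr \ge \PlEr - \delta^T\right] \ge 1-\alpha^T$ via the same ingredients --- decomposition of the calibrated regret into the $k^2$ swap components, Azuma with increments bounded by $\frac{\pmax}{T}\left(\frac{1}{\min_p\pi^t(p)}+1\right)$, and the affine-in-$c$ endpoint argument plus union bound that yields the $2k^2$ and $\log(2k^2/\alpha)$ factors --- and then concludes by the first Borel--Cantelli lemma using $\sum_T \alpha^T < \infty$, exactly as you do. The only substantive difference is that you explicitly flag that the per-round increment bounds are random (predictable), requiring a self-normalized or predictable-bound version of Azuma; the paper applies Azuma with these data-dependent $d_t$ silently, so your remark is a refinement of, not a departure from, its proof.
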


From the above two lemmas, we obtain the following theorem showing that the plausible calibrated regret estimator $\EstPlEr$ along with the error margin $\delta^T$ has the desirable limiting behaviors as $T$ goes to infinity. Algorithms satisfy calibrated vanishing regret and explore reasonably enough have vanishing upper confidence bound and can thus pass tests with target regret level vanishes as $T$ goes to infinity, while algorithms with non-vanishing plausible calibrated regret can not.  
\begin{theorem}
As $T\to \infty$, suppose the regulator chooses a vanishing sequence of $\{\alpha^T\}_T$ satisfying $\alpha^T = \Theta(T^{-2})$,
 \begin{itemize}
     \item if the seller's algorithm satisfies
     \begin{equation}
          \sum_{t=1}^T\left(\frac{1}{\min_{p}\pi^t(p)}+1\right)^2 = o\left(T^2/\log(T)\right),
     \end{equation}
     the upper confidence bound  $\EstUCB = \EstPlEr+\delta^T$ vanishes when the seller's true regret $\TrueEr$ vanishes almost surely;
     \item if seller's plausible calibrated regret $\PlEr$ does not vanish then the upper confidence bound $\EstUCB=\EstPlEr + \delta^T$ does not vanish almost surely.
 \end{itemize}
  \label{thm:estimated-reg-consistency}
\end{theorem}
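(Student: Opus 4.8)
The plan is to obtain the theorem as a direct corollary of the two consistency lemmas already established (\Cref{lem:upper-consistency} and \Cref{lem:lower-consistency}), the only additional ingredient being a one-line asymptotic estimate showing that the error margin $\delta^T$ itself vanishes under the hypothesis of the first bullet.

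For the soundness bullet (non-vanishing $\PlEr$), I would simply invoke \Cref{lem:lower-consistency}: under the regulator's choice $\alpha^T = \Theta(T^{-2})$, it states that $\lim_{T\to\infty}\PlEr > 0$ implies $\lim_{T\to\infty}(\EstPlEr + \delta^T) > 0$ almost surely, which is precisely the assertion that $\EstUCB = \EstPlEr + \delta^T$ does not vanish when $\PlEr$ does not. Nothing further is needed here.

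For the auditability bullet I would argue in two steps. First, observe that $\EstPlEr \ge 0$ for every $T$ (and likewise $\TrueEr \ge 0$): taking $\sigma$ to be the identity remapping makes every per-round deviation term zero, so $\max_\sigma \EstEr{\sigma}{c} \ge 0$ for each $c \in \costrange$ and hence the minimum over $c$ is nonnegative. Consequently ``$\TrueEr$ vanishes'' is the same statement as ``$\lim_T \TrueEr \le 0$'', which is exactly the hypothesis of \Cref{lem:upper-consistency}, and the summability condition assumed in this bullet is verbatim the one that lemma requires; applying it gives $\lim_T \EstPlEr \le 0$ a.s., i.e.\ $\EstPlEr \to 0$ a.s.\ by nonnegativity. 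Second, I would check $\delta^T \to 0$: substituting $\alpha^T = \Theta(T^{-2})$ into the formula for $\delta^T$ gives $\log(2k^2/\alpha^T) = \Theta(\log T)$, so the quantity under the square root is $\Theta(\log T)\cdot o(T^2/\log T) = o(T^2)$, whence $\delta^T = \frac{k\pmax}{T}\cdot o(T) = o(1)$. Adding the two pieces yields $\EstUCB = \EstPlEr + \delta^T \to 0$ a.s.

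I do not expect a genuine obstacle, since the theorem is an assembly of lemmas already in hand; the only place requiring attention is the bookkeeping for $\delta^T$ — verifying that the logarithmic blow-up introduced by the shrinking confidence parameter $\alpha^T$ is dominated by the $o(T^2/\log T)$ slack in the exploration condition, so that $\delta^T$ still tends to zero — together with the small observation that nonnegativity of $\EstPlEr$ (via the identity remapping) is what upgrades the ``$\limsup \le 0$'' conclusion of \Cref{lem:upper-consistency} into honest convergence to zero.
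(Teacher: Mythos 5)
Your proposal is correct and follows essentially the same route as the paper: the second bullet is quoted directly from Lemma \ref{lem:lower-consistency}, and the first bullet combines Lemma \ref{lem:upper-consistency} with the same bookkeeping showing $\delta^T = o(1)$ when $\alpha^T = \Theta(T^{-2})$ and the exploration sum is $o(T^2/\log T)$. Your extra remark that nonnegativity of $\EstPlEr$ (via the identity remapping) upgrades the ``$\le 0$'' limit to genuine convergence is a small clarification the paper leaves implicit, not a different argument.
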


\section{Conclusion}
In this work, we propose a definition for algorithmic non-collusion for pricing algorithms and a framework for empirically auditing non-collusion based on statistical tests on the data. Based on our framework, we give an instantiation with propensity score estimators and provide its statistical complexity.

The propensity score estimator for plausible maximum regret used in the empirical propensity score test introduced in Section \ref{sec:pscore} makes a few assumptions on the seller's algorithm. The seller's algorithm is required to either distribute some amount of probabilities on each action at every round or be robustly good in its operating environment. The accuracy and efficiency depend reversely on the magnitude of these probabilities. This raises the question of whether there are estimators without such restrictions.

One natural direction for future work is to find low plausible regret tests with lower statistical complexity or looser auditability requirements for transcripts.

Another interesting question is whether auditing non-collusion can be formulated as a continuing process without the regulator deciding a fixed time horizon $T$.

\begin{acks}
    This work started as a part of the IDEAL Spring 2021 Special Quarter on Data Science and Law and draws inspiration from the participants.
    We are grateful to James B. Speta for his valuable feedback on our initial draft. We thank Martino Banchio, Ilya R. Segal, and Xiao Wang for impactful discussions. We also appreciate the feedback and revision recommendations from the CSLaw’24 reviewers. We thank Chang Wang for pointing out typos and providing comments for the version of this paper published at CSLaw'24.
    The authors are partially supported by the National Science Foundation (CCF-1934931 and EECS-2216970). 
\end{acks}
\bibliography{ref.bib}
\pagebreak
\appendix
\section{Proofs}

\subsection{Theorem \ref{thm:pscore-sample-complexity}}
\begin{fact}[Azuma's Inequality\footnote{From 
 Michel Habib, Colin McDiarmid, Jorge Ramirez-Alfonsin, Bruce Reed, (1998), \textit{Probabilistic Methods
for Algorithmic Discrete Mathematics} Theorem 3.10. Proof see Theorem 13.4 of Michael Mitzenmacher and Eli Upfal (2017) \textit{Probability and Computating}} ]
    Given a sequence of random variables $\{Y_t\}_t$  and a filtration $\{\mathcal{F}_t\}_t$ such that $\mathbb{E}[Y_t\mid \mathcal{F}_{t-1}] = 0$. If there exists $\{d_t\}$ such that $|Y_t| \leq d_t$ for every $t$, then for any $\delta \geq 0$ and $T$:
    \begin{equation}
        \Pr\left[\sum_{t=1}^T Y_t\geq \delta \right] \leq \exp\left(-\frac{\delta^2}{2\sum_{t=1}^T d_t^2}\right),\,
        \Pr\left[\sum_{t=1}^T Y_t\leq -\delta \right] \leq \exp\left(-\frac{\delta^2}{2\sum_{t=1}^T d_t^2}\right). 
    \end{equation}
\end{fact}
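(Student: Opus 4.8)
The plan is to prove the upper-tail bound by the exponential-moment (Chernoff) method, reduce the moment generating function of the partial sums to a product of conditional moment generating functions via the tower property, control each of those by a conditional Hoeffding lemma, and finally obtain the lower-tail bound by symmetry. First I would fix $\lambda > 0$ and apply Markov's inequality to the nonnegative random variable $e^{\lambda\sum_{t=1}^T Y_t}$:
\[
\Pr\!\left[\sum_{t=1}^T Y_t \geq \delta\right] \;\leq\; e^{-\lambda\delta}\,\mathbb{E}\!\left[e^{\lambda\sum_{t=1}^T Y_t}\right].
\]
Writing $S_t = \sum_{s=1}^t Y_s$ and using that $S_{T-1}$ is $\mathcal{F}_{T-1}$-measurable, the tower property gives $\mathbb{E}[e^{\lambda S_T}] = \mathbb{E}\big[e^{\lambda S_{T-1}}\,\mathbb{E}[e^{\lambda Y_T}\mid\mathcal{F}_{T-1}]\big]$, so it suffices to bound the conditional moment generating function $\mathbb{E}[e^{\lambda Y_T}\mid\mathcal{F}_{T-1}]$ pointwise and then iterate downward through $t = T-1, \dots, 1$.

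The key lemma I would establish is the conditional Hoeffding bound: if $\mathbb{E}[Z\mid\mathcal{F}]=0$ and $|Z|\leq d$ almost surely, then $\mathbb{E}[e^{\lambda Z}\mid\mathcal{F}] \leq e^{\lambda^2 d^2/2}$. The proof uses convexity of $z \mapsto e^{\lambda z}$ on $[-d,d]$: it lies below its chord, so $e^{\lambda z} \leq \tfrac{d-z}{2d}\,e^{-\lambda d} + \tfrac{d+z}{2d}\,e^{\lambda d}$ pointwise; taking $\mathbb{E}[\cdot\mid\mathcal{F}]$ and using $\mathbb{E}[Z\mid\mathcal{F}]=0$ to kill the linear term leaves $\mathbb{E}[e^{\lambda Z}\mid\mathcal{F}] \leq \cosh(\lambda d)$, and $\cosh(x) \leq e^{x^2/2}$ follows by comparing Taylor coefficients since $(2n)! \geq 2^n\,n!$. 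Applying this with $Z = Y_t$, $\mathcal{F} = \mathcal{F}_{t-1}$, $d = d_t$ and iterating yields $\mathbb{E}[e^{\lambda S_T}] \leq \prod_{t=1}^T e^{\lambda^2 d_t^2/2} = \exp\!\big(\tfrac{\lambda^2}{2}\sum_{t=1}^T d_t^2\big)$.

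Substituting back gives $\Pr[\sum_t Y_t \geq \delta] \leq \exp\!\big(-\lambda\delta + \tfrac{\lambda^2}{2}\sum_t d_t^2\big)$ for every $\lambda > 0$; the case $\sum_t d_t^2 = 0$ is trivial, and otherwise I would optimize by choosing $\lambda = \delta/\sum_t d_t^2$, producing the claimed bound $\exp\!\big(-\delta^2/(2\sum_t d_t^2)\big)$. For the lower tail I would note that $\{-Y_t\}_t$ is again a martingale difference sequence adapted to $\{\mathcal{F}_t\}_t$ with $|-Y_t| \leq d_t$, so $\Pr[\sum_t Y_t \leq -\delta] = \Pr[\sum_t (-Y_t) \geq \delta] \leq \exp\!\big(-\delta^2/(2\sum_t d_t^2)\big)$ by the upper-tail result already proved. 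The only genuinely delicate step is the conditional Hoeffding lemma — justifying the chord inequality and the $\cosh(x) \leq e^{x^2/2}$ estimate — while everything else (Markov, the tower-property peeling, and the one-variable optimization over $\lambda$) is routine.
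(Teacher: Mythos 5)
Your argument is correct: it is the standard Chernoff-bound/Hoeffding-lemma proof of Azuma's inequality, which is exactly the proof in the references the paper cites for this Fact (the paper itself states it without proof), so there is nothing to object to beyond noting that the tower-property peeling implicitly uses that each $Y_t$ is $\mathcal{F}_t$-measurable, an adaptedness assumption that is also implicit in the paper's statement. No gaps otherwise; the chord inequality, the bound $\cosh(x)\leq e^{x^2/2}$, the optimization $\lambda=\delta/\sum_t d_t^2$, and the symmetry step for the lower tail are all handled correctly.
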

\begin{lemma}
Let $c_0$ be seller's true cost, $\estplcost = \arg\min_{c}\max_{\sigma}\EstEr{\sigma}{c}$ be the estimated plausible cost, $\plcost = \arg\min_{c}\max_{\sigma}\Er{\sigma}{c}$ be the plausible cost, $k = |\priceset|$ be the number of price levels, $\min_p\pi^t(p) = \min_{p} \pi^t(p)$ be the minimum among the probabilities of posting each price level at round $s$ by the seller. We have
\begin{equation}
 \Pr[\max_{\sigma}\EstEr{\sigma}{\estplcost} \leq \max_{\sigma}\Er{\sigma}{c_0} + \delta ]
\geq 1-k^2\exp\left(-\frac{\delta^2}{2k^2\sum_{t=1}^T d_t^2}\right),
\end{equation}
\begin{equation}
\Pr[\max_{\sigma}\EstEr{\sigma}{\estplcost} \geq \max_{\sigma}\Er{\sigma}{\plcost} - \delta] \geq 1-2k^2 \exp\left(-\frac{\delta^2}{2k^2\sum_{t=1}^T d_t^2}\right)
\end{equation}
where
\begin{equation}
    d_t = \frac{1}{T}\left(\frac{1}{\min_p\pi^t(p)}+1\right)\pmax.
\end{equation}
\end{lemma}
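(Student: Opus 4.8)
Both bounds follow by writing $\max_\sigma$ in a form that collapses the $k^k$ candidate remaps down to $O(k^2)$ martingale averages, then applying the Azuma bound stated above. Write $\hat A^{p,p'}(c) = \tfrac1T\sum_{t=1}^T \pi^t(p)\,(p'-c)\,\tilde x^t(p')$ and $A^{p,p'}(c) = \tfrac1T\sum_{t=1}^T \pi^t(p)\,(p'-c)\,x^t(p')$ for $p,p'\in\priceset$. Since a remap can be optimized independently at each source price, $\max_\sigma\EstEr{\sigma}{c} = \sum_{p\in\priceset}\bigl(\max_{p'}\hat A^{p,p'}(c) - \hat A^{p,p}(c)\bigr)$, and the same identity holds with hats removed. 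Thus the only stochastic quantities that matter are the $k^2$ errors $\hat A^{p,p'}(c)-A^{p,p'}(c)$, and in fact only their centered differences $\Delta^{p,p'}(c) \equiv \bigl(\hat A^{p,p'}(c)-A^{p,p'}(c)\bigr) - \bigl(\hat A^{p,p}(c)-A^{p,p}(c)\bigr)$. Because $\tilde x^t(\cdot)$ is unbiased for $x^t(\cdot)$ given the history, $\Delta^{p,p'}(c)$ is an average of martingale differences; a short case analysis — splitting on whether the realized price $p^t$ equals $p$, $p'$, or neither, and using that $\tilde x^t$ is supported on the single realized price so the two inverse‑propensity terms never fire at once — shows each per‑round term is bounded in magnitude by $d_t$ (here one uses $|p'-c|\le\pmax$, valid for $c\in\costrange$). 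The Azuma fact then gives $\Pr[\pm\Delta^{p,p'}(c)\ge \delta/k]\le\exp(-\delta^2/(2k^2\sum_t d_t^2))$ for each fixed triple $(p,p',c)$.

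For the first inequality, optimality of $\estplcost$ gives $\max_\sigma\EstEr{\sigma}{\estplcost} = \min_{c\in\costrange}\max_\sigma\EstEr{\sigma}{c}\le\max_\sigma\EstEr{\sigma}{c_0}$ (using $c_0\in\costrange$), so it suffices to compare estimated and true calibrated regret at the single deterministic cost $c_0$. Expanding $\max_{p'}\hat A^{p,p'}(c_0)-\hat A^{p,p}(c_0)$ at its (data‑dependent) maximizer and adding and subtracting the true values shows this per‑source term is at most $\max_{p'}A^{p,p'}(c_0)-A^{p,p}(c_0) + \max_{p'}\Delta^{p,p'}(c_0)$. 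Hence on the event that $\Delta^{p,p'}(c_0)\le\delta/k$ for all $k^2$ pairs — which by the Azuma bound and a union bound fails with probability at most $k^2\exp(-\delta^2/(2k^2\sum_t d_t^2))$ — summing over the $k$ source prices yields $\max_\sigma\EstEr{\sigma}{\estplcost}\le\max_\sigma\Er{\sigma}{c_0}+\delta$.

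The second inequality is parallel but needs one more idea, since $\estplcost$ is random. Writing $\max_\sigma\EstEr{\sigma}{\estplcost}=\min_c\max_\sigma\EstEr{\sigma}{c}$, it is enough to show that with the claimed probability $\max_\sigma\Er{\sigma}{c}-\max_\sigma\EstEr{\sigma}{c}\le\delta$ holds \emph{simultaneously for all} $c\in\costrange$: then evaluating at $c=\estplcost$ and using $\max_\sigma\Er{\sigma}{\estplcost}\ge\min_c\max_\sigma\Er{\sigma}{c}=\max_\sigma\Er{\sigma}{\plcost}$ finishes the argument. The same per‑source manipulation — now expanded at the maximizer of $A^{p,\cdot}(c)$ — bounds $\max_\sigma\Er{\sigma}{c}-\max_\sigma\EstEr{\sigma}{c}$ by $\sum_p\max_{p'}(-\Delta^{p,p'}(c))$, so it suffices to require $-\Delta^{p,p'}(c)\le\delta/k$ for all pairs and all $c\in\costrange$. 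Crucially, each $\Delta^{p,p'}(c)$ is affine in $c$, so its supremum over the interval $\costrange$ is attained at an endpoint; thus we only need to control the $2k^2$ events indexed by $(p,p')$ and $c\in\{\ubar{c},\bar c\}$, which by Azuma and a union bound fail with probability at most $2k^2\exp(-\delta^2/(2k^2\sum_t d_t^2))$.

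\textbf{Where the difficulty lies.} The Azuma bookkeeping (verifying the martingale property and the per‑round bound $d_t$, where the coupling of the two inverse‑propensity terms is what keeps the constant at $d_t$ rather than $2d_t$, hence matching the exponent $\delta^2/(2k^2\sum_t d_t^2)$) is routine. The two steps that need care are the per‑source‑price decomposition that avoids a hopeless union bound over all $k^k$ remaps, and — in the lower bound — converting the ``uniform over $c\in\costrange$'' requirement into finitely many events via affinity in $c$, while noting that $d_t$ may be taken $\mathcal F_{t-1}$‑measurable so that Azuma still applies.
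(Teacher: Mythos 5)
Your proposal is correct and follows essentially the same route as the paper's proof: the per-source-price decomposition of calibrated regret into $k^2$ martingale averages, the case analysis bounding each increment by $d_t$, Azuma plus a union bound (reducing the upper tail to the fixed cost $c_0$ via optimality of $\estplcost$), and handling the random $\estplcost$ in the lower tail by exploiting affinity in $c$ to reduce uniformity over $\costrange$ to the two endpoints, giving the $2k^2$ factor. No substantive differences to report.
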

\begin{proof}
Observe that for any fixed $c$, since remapping $p$ to $p'$ does not affect the payoff of $p'' \neq p'$, we have
\begin{equation}
    \max_{\sigma}\Er{\sigma}{c} = \sum_{p \in \priceset} \max_{p' \in \priceset} \sum_{t=1}^T\frac{1}{T} \Big(\pi^t(p)\left(u^t(p',c)-u^t(p,c)\right)\Big),\label{eqn:reg_decomp}
\end{equation}
and similarly
\begin{equation}
    \max_{\sigma}\EstEr{\sigma}{c} = \sum_{p \in \priceset} \max_{p' \in \priceset} \sum_{t=1}^T\frac{1}{T} \Big(\pi^t(p)\left(\tilde{u}^t(p',c)-\tilde{u}^t(p,c)\right)\Big).
\end{equation}
Let 
\begin{align}
    r^t_{p,p'}(c) = \frac{1}{T} \Big(\pi^t(p)\left(u^t(p',c)-u^t(p,c)\right)\Big), \tilde{r}^t_{p,p'}(c) = \frac{1}{T} \Big(\pi^t(p)\left(\tilde{u}^t(p',c)-\tilde{u}^t(p,c)\right)\Big),
\end{align}
\begin{equation}
    R^T_{p,p'}(c) = \sum_{t=1}^T r^t_{p,p'}(c), \tilde{R}^T_{p,p'}(c) = \sum_{t=1}^T \tilde{r}^t_{p,p'}(c),
\end{equation}
we have
\begin{equation}
\max_{\sigma}\Er{\sigma}{c} = \sum_{p\in\priceset}\max_{p' \in \priceset}R^T_{p,p'}(c),
\end{equation}
and
\begin{equation}
\max_{\sigma}\EstEr{\sigma}{c} = \sum_{p\in\priceset}\max_{p' \in \priceset}\tilde{R}^T_{p,p'}(c).
\end{equation}
We first show that for the deviation $\Delta R^T_{p,p'}(c) = \tilde{R}^T_{p,p'}(c) - R^T_{p,p'}(c)$ for each $p,p'$ is small with high probability using Azuma's Inequality.

Let $\Delta r^t_{p,p'}(c) = \tilde{r}^t_{p,p'}(c) - r^t_{p,p'}(c)$, we have
\begin{equation}
    \Delta R^T_{p,p'}(c) = \sum_{t=1}^T \Delta r^t_{p,p'}(c),
\end{equation} 
i.e., $\{\Delta r^t_{p,p'}(c)\}_t$ as the $\{Y_t\}$ in the formulation of Azuma's Inequality given above.

Let $\mathcal{F}_t$ be the  information available to the seller's algorithm up to $s$.  We first show that $\mathbb{E}[\Delta r_{p,p'}^t(c) \mid \mathcal{F}_{t-1}] = 0$, where. In fact,
\begin{equation}
  \Delta r^t_{p,p'}(c) = 
  \frac{1}{T}
  \cdot \pi^t(p)
  \Big(
  (p'-c)(\tilde{x}^t(p')-x^t(p'))-(p-c)(\tilde{x}^t(p)-x^t(p))
  \Big).
\end{equation}
For all $p \in \priceset$, by definition of 
$\tilde{x}^t(p)$, we have  $\mathbb{E}[ \tilde{x}^t(p) \mid \pi^t(p)] = \mathbb{E}[ \tilde{x}^t(p) \mid \mathcal{F}_{t-1}] = x^t(p)$ since $\pi^t(p)$ is determined by $\mathcal{F}_{t-1}$, as any algorithm can only use information available up to $t-1$ to compute its distribution of prices to post at round $s$. The same argument applies to $p'$. Therefore, by linearity of expectation, we have
$\mathbb{E}[\Delta r_{p,p'}^t(c) \mid \mathcal{F}_{t-1}] = 0$.

To apply Azuma's inequality, we now figure out the bound of the magnitude of each $\Delta r_{p,p'}^t(c)$. For $p=p'$ we have $\Delta r_{p,p'}^t(c)$, hence we assume below that $p \neq p'$. Let $\pmax = \max_{p\in \priceset}{p}$. By definition of $\widetilde{x}^t$, we have:
\begin{itemize}
    \item When $p'=p^t$,
    \begin{align}
        \Delta r_{p,p'}^t(c) = \frac{1}{T}
  \cdot \pi^t(p)
  \Big(
  (p'-c)\left(\frac{x^t(p')}{\pi^t(p')}-x^t(p')\right)-(p-c)(0-x^t(p))
  \Big).
    \end{align}
    Since for any $p\in \priceset$, $0 \leq \pi^t(p) \leq 1, 0 \leq x^t(p) \leq 1$ and $-\pmax \leq p-c \leq \pmax$,
    we have
    \begin{equation}
        |\Delta r_{p,p'}^t(c)| \leq \frac{1}{T}\left(\frac{1}{\pi^t(p')}+1\right)\pmax.
    \end{equation}
    \item When 
    $p=p^t$,
    \begin{align}
        \Delta r_{p,p'}^t(c) &= \frac{1}{T}
  \cdot \pi^t(p)
  \Big(
  (p'-c)\left(0-x^t(p')\right)-(p-c)\left(\frac{x^t(p)}{\pi^t(p)}-x^t(p)\right)
  \Big) \\
  &= \frac{1}{T}\left(
  (p'-c)(-\pi^t(p)x^t(p'))-(p-c)(x^t(p)-\pi^t(p)x^t(p))\right).
    \end{align}
    Since for any $p\in \priceset$, $0 \leq \pi^t(p) \leq 1, 0 \leq x^t(p) \leq 1$ and $-\pmax \leq p-c \leq \pmax$,
    we have
    \begin{equation}
        |\Delta r_{p,p'}^t(c)| \leq \frac{1}{T}\cdot 2\pmax.
    \end{equation}
    \item When $p'\neq p^t$ and $p\neq p^t$,
    \begin{equation}
         \Delta r_{p,p'}^t(c) = \frac{1}{T}
  \cdot \pi^t(p)
  \Big(
  (p'-c)\left(0-x^t(p')\right)-(p-c)\left(0-x^t(p)\right)
  \Big).
    \end{equation}
    Since for any $p\in \priceset$, $0 \leq \pi^t(p) \leq 1, 0 \leq x^t(p) \leq 1$ and $-\pmax \leq p-c \leq \pmax$,
    we have
    \begin{equation}
        |\Delta r_{p,p'}^t(c)| \leq \frac{1}{T}\cdot 2\pmax.
    \end{equation}
\end{itemize}
By the fact that $\pi^t(p')\leq 1$, we have
\begin{equation}
    \frac{1}{T}\left(\frac{1}{\pi^t(p')}+1\right)\pmax \geq \frac{1}{T}\cdot 2\pmax,
\end{equation}
therefore, in conclusion, 
\begin{equation}
 |\Delta r_{p,p'}^t(c)| \leq \frac{1}{T}\left(\frac{1}{\pi^t(p')}+1\right)\pmax.
\end{equation}

In Azuma's inequality, we can take $d_t$ to be \begin{equation}
    \frac{1}{T}\left(\frac{1}{\min_p\pi^t(p)}+1\right)\pmax, \label{eqn:appx-azuma-ds}
\end{equation}
which a uniform upper bound for $\Delta r_{p,p'}^t(c)$ over all $p,p'$.

\paragraph{Upper Tail} 
For any fixed $c$, we have,
\begin{align}
    &\Pr[\max_{\sigma}\EstEr{\sigma}{c} - \max_{\sigma}\Er{\sigma}{c} \geq \delta] \\  = &\Pr\left[\sum_{p \in \priceset} \max_{p'\in \priceset}\tilde{R}^T_{p,p'}(c) - \sum_{p \in \priceset} \max_{p''\in \priceset}R^T_{p,p''}(c) \geq \delta \right]
    \\ =
    &\Pr\left[\sum_{p \in \priceset} \max_{p'\in \priceset}\tilde{R}^T_{p,p'}(c) - \max_{p''\in \priceset} R^T_{p,p''}(c) \geq \delta \right]\\
    \leq 
    &\Pr\left[\exists p \in \priceset, \max_{p'\in \priceset}\tilde{R}^T_{p,p'}(c) - \max_{p''\in \priceset} R^T_{p,p''}(c) \geq \frac{\delta}{|\priceset|} \right]\\
    \leq 
    &\Pr\left[\exists p \in \priceset, \max_{p'\in \priceset}\left(\tilde{R}^T_{p,p'}(c) - R^T_{p,p'}(c)\right) \geq \frac{\delta}{|\priceset|} \right]\\
    =&\Pr\left[\exists p \in \priceset, \exists p'\in \priceset,\tilde{R}^T_{p,p'}(c) - R^T_{p,p'}(c) \geq \frac{\delta}{|\priceset|} \right]
\end{align}
The first inequality comes from the simple fact that at least one element of the sum must be no less than the average. Taking the union bound over all $p,p' \in \priceset$, we get
\begin{equation}
\Pr\left[\exists p \in \priceset, \exists p'\in \priceset,\tilde{R}^T_{p,p'}(c) - R^T_{p,p'}(c) \geq \frac{\delta}{|\priceset|} \right] \leq \sum_{p \in \priceset}\sum_{p'\in \priceset}\Pr\left[\tilde{R}^T_{p,p'}(c) - R^T_{p,p'}(c) \geq \frac{\delta}{|\priceset|}\right]. 
\end{equation}
Note that $|\priceset| = k$,  by Azuma's inequality with $d_t$ defined as above uniformly over all $p,p'$, we have
\begin{equation}
    \Pr\left[\tilde{R}^T_{p,p'}(c) - R^T_{p,p'}(c) \geq \frac{\delta}{|\priceset|}\right] = \Pr\left[\Delta R_{p,p'}^T(c) \geq \frac{\delta}{k}\right] \leq \exp\left(-\frac{\delta^2}{2k^2\sum_{t=1}^T d_t^2}\right).
\end{equation}
Therefore, for any fixed $c$,
\begin{equation}
    \Pr[\max_{\sigma}\EstEr{\sigma}{c} - \max_{\sigma}\Er{\sigma}{c} \geq \delta] \leq k^2\exp\left(-\frac{\delta^2}{2k^2\sum_{t=1}^T d_t^2}\right).\label{eqn:appx-deviation-uppertail}
\end{equation}

Now we consider the relationship between $\min_{c \in \costrange}\max_{\sigma}\EstEr{\sigma}{c}$ and $\max_{\sigma}\Er{\sigma}{c_0}$.
By the fact that the estimated plausible cost $\estplcost =\arg\min_{c}\max_{\sigma}\EstEr{\sigma}{c}$, we
have 
\begin{equation}
\begin{split}
&\Pr[\max_{\sigma}\EstEr{\sigma}{\estplcost} \leq \max_{\sigma}\Er{\sigma}{c_0} + \delta ]\\
\geq &\Pr[\max_{\sigma}\EstEr{\sigma}{c_0}
\leq \max_{\sigma}\Er{\sigma}{c_0} + \delta ]\\ \geq &1-k^2\exp\left(-\frac{\delta^2}{2k^2\sum_{t=1}^T d_t^2}\right) 
\end{split}.\label{eqn:appx-uppertail-bound}
\end{equation}

\paragraph{Lower Tail}
Let $\plcost = \arg\min_{c}\max_{\sigma} \Er{\sigma}{c}$ be the plausible cost of the seller. By definition of the estimated plausible cost $\estplcost$, we have $\max_{\sigma}\Er{\sigma}{\plcost} \leq  \max_{\sigma}\Er{\sigma}{\estplcost}$. Therefore,
\begin{equation}
    \max_{\sigma}\EstEr{\sigma}{\estplcost} - \max_{\sigma}\Er{\sigma}{\plcost} \geq \max_{\sigma}\EstEr{\sigma}{\estplcost} - \max_{\sigma}\Er{\sigma}{\estplcost},
\end{equation}
which implies
\begin{equation}
    \Pr[\max_{\sigma}\EstEr{\sigma}{\estplcost} - \max_{\sigma}\Er{\sigma}{\plcost} \leq - \delta] \leq  
    \Pr[\max_{\sigma}\EstEr{\sigma}{\estplcost} - \max_{\sigma}\Er{\sigma}{\estplcost} \leq -\delta].
\end{equation}

Note that since $\estplcost$ is a random variable, we can not simply treat it as a fixed $c$ and obtain a probability bound using the exact same argument as the upper tail for $\max_{\sigma}\EstEr{\sigma}{\estplcost}-\max_{\sigma}\Er{\sigma}{\estplcost}$. Instead, we consider the event across all fixed $c$.

\begin{align}
&\Pr[\max_{\sigma}\EstEr{\sigma}{\estplcost} - \max_{\sigma}\Er{\sigma}{\estplcost} \leq -\delta]\\
  = 
  &\Pr\left[\sum_{p \in \priceset} \max_{p'\in \priceset}\tilde{R}^T_{p,p'}(\estplcost) - \sum_{p \in \priceset} \max_{p''\in \priceset}R^T_{p,p''}(\estplcost) \leq -\delta \right]\\ 
  =
  &\Pr\left[\sum_{p \in \priceset} \max_{p''\in \priceset}R^T_{p,p''}(\estplcost) - \max_{p'\in \priceset} \tilde{R}^T_{p,p'}(\estplcost) \geq \delta \right]\\
  \leq
  &\Pr\left[\sum_{p \in \priceset} \max_{p''\in \priceset}R^T_{p,p''}(\estplcost) - \tilde{R}^T_{p,p''}(\estplcost) \geq \delta \right]\\
  \leq 
  &\Pr\left[\exists c \in \costrange, \sum_{p \in \priceset} \max_{p''\in \priceset}R^T_{p,p''}(c) - \tilde{R}^T_{p,p''}(c) \geq \delta \right]\\
 \leq &1-\Pr\left[\forall c \in \costrange, \sum_{p \in \priceset} \max_{p''\in \priceset}R^T_{p,p''}(c) -  \tilde{R}^T_{p,p''}(c) \leq \delta \right]. \label{eqn:appx-lowertail-breakpoint-1}
\end{align}
Note that with similar arguments for reasoning about the upper tail, we have
\begin{align}
    &\Pr\left[\forall c \in \costrange, \sum_{p \in \priceset} \max_{p''\in \priceset}R^T_{p,p''}(c) - \tilde{R}^T_{p,p''}(c) \leq \delta \right] \\
    \geq 
    &\Pr\left[\forall c \in \costrange, \forall p \in \priceset, \max_{p''\in \priceset}R^T_{p,p''}(c) - \tilde{R}^T_{p,p''}(c) \leq \frac{\delta}{|\priceset|} \right] \\
    =
    &\Pr\left[\forall c \in \costrange, \forall p \in \priceset, \forall p'' \in \priceset, R^T_{p,p''}(c) -  \tilde{R}^T_{p,p''}(c) \leq \frac{\delta}{|\priceset|} \right]\\
    =& \Pr\left[\forall c \in \costrange, \forall p \in \priceset, \forall p'' \in \priceset, \Delta R^T_{p,p''}(c) \ge -\frac{\delta}{|\priceset|} \right] \\
    \geq 
    &\Pr\left[ \forall p \in \priceset, \forall p'' \in \priceset, 
    \left(\Delta R_{p,p''}^T(\ubar{c}) \geq -\frac{\delta}{|\mathcal{P}|}  
    \right) \wedge
    \left( R_{p,p''}^T(\overline{c}) \geq -\frac{\delta}{|\mathcal{P}|}\right)
    \right]
\end{align}
The last inequality follows from the following observation: $\Delta R_{p,p''}^T(c)$ is linear in $c$, and hence its minimum over $c \in \costrange$ must be attained at either of the endpoints, hence we have
\begin{equation}
 \left[\Delta R_{p,p''}^T(\ubar{c}) \geq -\frac{\delta}{|\mathcal{P}|}\right] \cap \left[\Delta R_{p,p''}^T(\overline{c}) \geq -\frac{\delta}{|\mathcal{P}|}\right] 
 \subseteq
 \left[\forall c \in [\underline{c},\overline{c}], \Delta R_{p,p''}^T(c) \geq -\frac{\delta}{|\mathcal{P}|}\right]  .
\end{equation}
Therefore, continuing from (\ref{eqn:appx-lowertail-breakpoint-1}), we have
\begin{align}
    &1-\Pr\left[\forall c \in \costrange, \sum_{p \in \priceset} \max_{p''\in \priceset}R^T_{p,p''}(c) -  \tilde{R}^T_{p,p''}(c) \leq \delta \right] \\
    \leq 
    & 1- \Pr\left[\forall p \in \priceset, \forall p'' \in \priceset, 
    \left( \Delta R_{p,p''}^T(\ubar{c}) \geq -\frac{\delta}{|\mathcal{P}|} \right) \wedge
    \left( \Delta R_{p,p''}^T(\overline{c}) \geq -\frac{\delta}{|\mathcal{P}|} \right)
    \right] \\
    =
    & \Pr\left[\exists p \in \priceset, \exists p'' \in \priceset, 
    \left( \Delta R_{p,p''}^T(\ubar{c}) \leq -\frac{\delta}{|\mathcal{P}|}\right) \vee
    \left(\Delta R_{p,p''}^T(\overline{c}) \leq -\frac{\delta}{|\mathcal{P}|}\right)
    \right].
\end{align}
Taking the union bound, we have
\begin{align}
    &\Pr\left[\exists p \in \priceset, \exists p'' \in \priceset, 
    \left( \Delta R_{p,p''}^T(\ubar{c}) \leq -\frac{\delta}{|\mathcal{P}|} \right) \vee
    \left( \Delta R_{p,p''}^T(\overline{c})
    \leq -\frac{\delta}{|\mathcal{P}|}
    \right) \right] \\
    \leq &\sum_{p \in \priceset}\sum_{p'' \in \priceset}\left(\Pr \left[\Delta R_{p,p''}^T(\underline{c})
    \leq -\frac{\delta}{|\mathcal{P}|}\right] + \Pr\left[\Delta R_{p,p''}^T(\overline{c})
    \leq -\frac{\delta}{|\mathcal{P}|}\right]\right)
\end{align}
Note that $|\priceset|=k$, by Azuma's inequality with $d_t$ defined above uniformly over all $p,p''$, we have for any fixed $c$
\begin{equation}
    \Pr\left[\Delta R_{p,p''}^T(c) \leq -\frac{\delta}{k}\right] \leq \exp\left(-\frac{\delta^2}{2k^2\sum_{t=1}^T d_t^2}\right).
\end{equation}
Since $\ubar{c},\overline{c}$ are fixed value, we have
\begin{align}
    &\Pr\left[\exists p \in \priceset, \exists p'' \in \priceset, 
    \left( \Delta R_{p,p''}^T(\ubar{c}) \leq -\frac{\delta}{|\mathcal{P}|} \right) \vee
    \left( \Delta R_{p,p''}^T(\overline{c})
    \leq -\frac{\delta}{|\mathcal{P}|}
    \right) \right] \\
    \leq & k^2\cdot 2\exp\left(-\frac{\delta^2}{2k^2\sum_{t=1}^T d_t^2}\right),
\end{align}
and we conclude that
\begin{equation}
    \Pr[\max_{\sigma}\EstEr{\sigma}{\estplcost} \geq \max_{\sigma}\Er{\sigma}{\plcost} - \delta] \geq 1-2k^2 \exp\left(-\frac{\delta^2}{2k^2\sum_{t=1}^T d_t^2}\right) \label{eqn:appx-lowertail-bound}
\end{equation}
\end{proof}
\paragraph{Main Result of Theorem \ref{thm:pscore-sample-complexity}} Finally, we restate our theorem and give its proof: For a given confidence $1-\alpha$, target regret level $\epsilon$ and minimum exploration probability $\pimin$, when
\begin{equation}
    T > \log\frac{2k^2}{\alpha}\cdot 2\left(\frac{k\pmax}{\epsilon/2}\right)^2\cdot \left(\frac{1}{\pimin} + 1\right)^2,
\end{equation}
the following holds:
\begin{itemize}
    \item If $\max_\sigma\Er{\sigma}{c_0} \leq \epsilon$ and $\pimin^T=\min_{t,p}\pi^t(p) \geq \pimin$, then with probability at least $1-\alpha$,
    \begin{equation}
        \max_{\sigma}\EstEr{\sigma}{\estplcost} + \delta^T \leq 2\epsilon ;
    \end{equation}
    \item if $\max_\sigma\Er{\sigma}{\plcost} > 2\epsilon$, then with probability at least $1-\alpha$,
    \begin{equation}
        \max_{\sigma}\EstEr{\sigma}{\estplcost} + \delta^T > 2\epsilon.
    \end{equation}
    
\end{itemize}
where 
\begin{equation}
    \delta^T =  \sqrt{\log\frac{2k^2}{\alpha}} \cdot \frac{\sqrt{2}k\pmax}{T} \cdot \sqrt{\sum_{t=1}^T\left(\frac{1}{\min_p\pi^t(p)} + 1\right)^2}.
\end{equation}
with probability $1-\alpha$.
\begin{proof}
    Assume that
    \begin{equation}
        T >  \log\frac{2k^2}{\alpha}\cdot 2\left(\frac{k\pmax}{\epsilon}\right)^2\cdot \left(\frac{1}{\pimin} + 1\right)^2.
    \end{equation}
    \begin{itemize}
        \item If $\max_\sigma\Er{\sigma}{c_0} \leq \epsilon$ and $\min_{s,p}\pi^t(p) \geq \pimin$, from (\ref{eqn:appx-uppertail-bound}) we have, with probability at least $1-\alpha$ 
        \begin{equation}
            \max_\sigma\EstEr{\sigma}{\estplcost} \leq \max_\sigma\Er{\sigma}{c_0} + \delta^T \leq \epsilon
            +\delta^T, \label{eqn:appx-upper-w-cond}
        \end{equation}
        which implies that
        \begin{equation}
        \max_\sigma\EstEr{\sigma}{\estplcost} + \delta^T \leq \epsilon + 2\delta^T.
        \end{equation}
        Since $\min_{s,p}\pi^t(p) \geq \pimin$, 
        we also have $\delta^T \leq \epsilon/2$. Hence, with probability at least $1-\alpha$
        \begin{equation}
            \max_\sigma\EstEr{\sigma}{\estplcost} + \delta^T \leq 2\epsilon.
        \end{equation}
        \item If $\max_\sigma\Er{\sigma}{\plcost} > 2\epsilon$, from (\ref{eqn:appx-lowertail-bound}) we have, with probability at least $1-\alpha$,
        \begin{equation}
             \max_\sigma\EstEr{\sigma}{\estplcost} \geq \max_\sigma\Er{\sigma}{\plcost} - \delta^T > 2\epsilon
            -\delta^T,
        \end{equation}
        which implies that
        \begin{equation}
        \max_\sigma\EstEr{\sigma}{\estplcost} + \delta^T > 2\epsilon.
        \end{equation}
    \end{itemize}
    
\end{proof}
\paragraph{Skipped Algebraic Steps} We elaborate the algebraic steps in the skipped in the proof immediately above.

We first plug in
\begin{equation}
    d_t = \frac{1}{T}\left(\frac{1}{\min_p\pi^t(p)}+1\right)\pmax,
\end{equation}
from (\ref{eqn:appx-azuma-ds})
into (\ref{eqn:appx-uppertail-bound}) and (\ref{eqn:appx-lowertail-bound}).
Solving for $\delta$ fixing other parameters, i.e., for any given $T$, from (\ref{eqn:appx-uppertail-bound}) and (\ref{eqn:appx-lowertail-bound}) respectively, we get
\begin{equation}
    \delta^T_{u} = \sqrt{\log\frac{k^2}{\alpha}} \cdot \frac{\sqrt{2}k\pmax}{T} \cdot \sqrt{\sum_{t=1}^T\left(\frac{1}{\min_p\pi^t(p)} + 1\right)^2},
\end{equation}
\begin{equation}
    \delta^T_{l} = \sqrt{\log\frac{2k^2}{\alpha}} \cdot \frac{\sqrt{2}k\pmax}{T} \cdot \sqrt{\sum_{t=1}^T\left(\frac{1}{\min_p\pi^t(p)} + 1\right)^2}.
\end{equation}
$\delta^T$ is taken as $\max(\delta^T_u,\delta^T_l)$.

To see that $\max_\sigma\EstEr{\sigma}{\estplcost} \leq \max_\sigma\Er{\sigma}{c_0} + \delta^T$ and
$\delta^T \leq \epsilon/2$ holds
when $\min_{s,p}\pi^t(p) \geq \pimin$ and
\begin{equation}
     T \geq \log\frac{2k^2}{\alpha}\cdot 2\left(\frac{k\pmax}{\epsilon}\right)^2\cdot \left(\frac{1}{\pimin} + 1\right)^2 
\end{equation}
holds: Multiply both sides of the above inequality by $T$, we get
\begin{equation}
 T^2 \geq \log\frac{2k^2}{\alpha}\cdot 2\left(\frac{k\pmax}{\epsilon/2}\right)^2\cdot T\left(\frac{1}{\pimin} + 1\right)^2.
\end{equation}
Further, since for all $t=1,\dots,T$, $\min_p\pi^t(p) \geq \min_{s,p}\pi^t(p) = \pimin^T  \geq \pimin$, we have
\begin{equation}
    T\left(\frac{1}{\pimin} + 1\right)^2 \geq \sum_{t=1}^{T} \left(\frac{1}{\min_p\pi^t(p)} + 1\right)^2,
\end{equation}
hence
\begin{equation}
T^2 \geq \log\frac{2k^2}{\alpha}\cdot 2\left(\frac{k\pmax}{\epsilon/2}\right)^2\cdot \sum_{t=1}^{T} \left(\frac{1}{\min_p\pi^t(p)} + 1\right)^2,
\end{equation}
i.e.,
\begin{equation}
    T \geq \sqrt{\log\frac{2k^2}{\alpha}} \cdot \frac{\sqrt{2}k\pmax}{\epsilon/2} \cdot \sqrt{\sum_{t=1}^{T}\left(\frac{1}{\min_p\pi^t(p)} + 1\right)^2}.
\end{equation}
We get the desired results by plugging it into (\ref{eqn:appx-uppertail-bound}) and into $\delta^T$.

\vspace{1em}

\subsection{Theorem \ref{thm:unif-exp-augment}}
By definition of $\hat{A}$, at each round, with probability $k\pimin$ a price $p$ is chosen uniformly from $\priceset$. The resulted price distribution assigns at least $k\pimin / |\priceset|$ probability on each price level. Therefore we have the minimum exploration probability of $\hat{A}$ is at least $k\pimin / |\priceset| = \pimin$.

Let the set of rounds running $\hat{A}$ be $I = I_{A} \sqcup I_{U}$ where $I_{A}$ is the set of rounds where the algorithm $A$ is called and $I_{U}$ is the set of rounds where a price is uniformly chosen from $\priceset$. Let $T'=|I|$ and by assumption we have $|I_A| \geq T$. We have the regret running $\hat{A}$ on $I$
\begin{equation}
\max_{\sigma}\ExpRegret^I_{\hat{A}}(\sigma,c)  = \max_{\sigma} \mathbb{E}_{p^t \sim \pi^t}\left[\frac{1}{T'}\sum_{t \in I}{u^t(\sigma(p^t),c)-u^t(p^t,c)}\mid |I_A| \geq T\right]
\end{equation}

Let $R^t(\sigma,c) = u^t(\sigma(p^t)-c)-u^t(p^t,c)$ and the event $B = [|I_A| \geq T]$. By the law of iterated expectation, with the inner expectation fixing a realization of which rounds $A$ is called, we have

\begin{align}
    &\max_{\sigma}\ExpRegret^I_{\hat{A}}(\sigma,c)\\
    =&\max_{\sigma} \mathbb{E}_{p^t \sim \pi^t}\left[\mathbb{E}\left[\frac{1}{T'}\sum_{t \in I}R^t(\sigma,c)\mid I_A, B\right]\mid B\right]
    \\
    \leq &\max_{\sigma} \mathbb{E}_{p^t \sim \pi^t}\left[\mathbb{E}\left[\frac{1}{T'}\sum_{t \in I_U}R^t(\sigma,c)\mid I_A, B\right]\mid B\right] + \max_{\sigma} \mathbb{E}_{p^t \sim \pi^t}\left[\mathbb{E}\left[\frac{1}{T'}\sum_{t \in I_A}R^t(\sigma,c)\mid I_A, B\right]\mid B\right] \\
    \leq & \mathbb{E}_{p^t \sim \pi^t}\left[\max_{\sigma}\mathbb{E}\left[\frac{1}{T'}\sum_{t \in I_U}R^t(\sigma,c)\mid I_A, B\right]\mid B\right] +  \mathbb{E}_{p^t \sim \pi^t}\left[\max_{\sigma}\mathbb{E}\left[\frac{1}{T'}\sum_{t \in I_A}R^t(\sigma,c)\mid I_A, B\right]\mid B\right]
    \label{eqn:aug-reg-decomp}
\end{align}
Since the maximum possible price $\pmax$ is normalized to 1, we have
\begin{equation}
\max_{\sigma}\mathbb{E}\left[\sum_{t \in I_U}R^t(\sigma,c)\mid I_A, B\right] \leq |I_U|,
\end{equation}
which corresponding to the regret when posting prices uniformly sampled from $\priceset$. Combining the definition of $\hat{A}$ and $I_A$, we have the first term of (\ref{eqn:aug-reg-decomp})
is no greater than $\mathbb{E}[|I_U|/T' \mid B] = k\pimin$.

Note that by the assumption that $A$ is blackout robust, we have 
\begin{equation}  \max_{\sigma}\mathbb{E}\left[\frac{1}{|I_A|}\sum_{t \in I_A}R^t(\sigma,c)\mid I_A, B\right] \leq \Rthresh_T.
\end{equation}
Therefore we know that the second term of (\ref{eqn:aug-reg-decomp}) is also bounded by $\Rthresh_T$ as $T'\geq |I_A|$. 

Combining the bounds we conclude that $\max_{\sigma}\ExpRegret^I_{\hat{A}}(\sigma,c) \leq \Rthresh_T + k\pimin$.

By definition of $\hat{A}$, whether $A$ is called is independent across different rounds. Therefore $|T_U|$ satisfies a negative binomial distribution with success probability $1-\pimin k$ and success number $r=T$, and we have $\mathbb{E}[|T_U|] = T\pi k /(1-\pimin k)$. Therefore, the total expected number of rounds $\mathbb{E}[T'] = \mathbb{E}[|T_U|] + T = T/(1-k\pimin)$

\subsection{Corollary \ref{cor:audit-compatibility}}
\begin{proof}
    By combining the Theorem \ref{thm:unif-exp-augment}, Theorem \ref{thm:pscore-sample-complexity}, and the definition of audit compatibility.
\end{proof}

\subsection{Lemma \ref{lem:stoch-robust} and \ref{lem:advers-robust}}
Given a no regret algorithm $A$ in an environment, with (average) regret upper bound $\Rthresh_T=f(T)=o(1)$ for running for time horizon at least $T$ in the environment. When the environment is stochastic, i.e. $x^t(p) \sim D(p)$ for all $t$, any independently selected subset of $T'$ rounds with length at least $T$ still satisfies $x^t(p) \sim D(p)\, i.i.d.$. Therefore, the regret is at most $\Rthresh_T=f(T)$. When the environment is adversarial, the algorithm's regret bound should hold when it runs on any rounds with length at least $T$.

\subsection{Lemma \ref{lem:upper-consistency}}
For any $\delta >0$, Let $A_T$ be the event that $\EstEr{\sigma}{\estplcost}-\Er{\sigma}{c_0} < \delta$. To show that 
\begin{equation}
    \Pr\left[\liminf_{s \to \infty} A_T\right] = \Pr\left[\EstEr{\sigma}{\estplcost}-\Er{\sigma}{c_0} < \delta\, (e.v.)\right] = 1,
\end{equation}
it suffices to show that 
\begin{equation}
\Pr\left[\EstEr{\sigma}{\estplcost}-\Er{\sigma}{c_0} \geq \delta\, (i.o.)\right] = \Pr\left[\left[\EstEr{\sigma}{\estplcost}-\Er{\sigma}{c_0} < \delta\, (e.v.)\right]^C\right] = 0.
\end{equation}
By First Borel-Cantelli Lemma\footnote{cf. 2.7 of
David Williams (1991), \textit{Probability with Martingales}} and from (\ref{eqn:appx-uppertail-bound}), it suffices to have
\begin{equation}
    \sum_{T=1}^\infty \Pr\left[\EstEr{\sigma}{\estplcost}-\Er{\sigma}{c_0} \geq \delta \right] = \sum _{T=1}^{\infty}k^2\exp\left(-\frac{\delta^2}{2k^2\sum_{t=1}^T d_t^2}\right) < \infty.
\end{equation}
It suffices to have
\begin{equation}
   k^2\exp\left(-\frac{\delta^2}{2k^2\sum_{t=1}^T d_t^2}\right) = k^2 \exp\left(-\frac{\delta^2}{2k^2}\cdot \frac{T^2}{\sum_{t=1}^T \left(\frac{1}{\min_p\pi^t(p)}+1\right)^2}\right) = O\left(\frac{1}{T^2}\right). \label{eqn:bc-summation-cond1}
\end{equation}
as $\sum_{T=1}^\infty T^{-2} < \infty$.
It suffices to have
\begin{equation}
   D(T)=\frac{T^2}{\sum_{t=1}^T \left(\frac{1}{\min_p\pi^t(p)}+1\right)^2} = \omega(\log T), \label{eqn:bc-summation-cond2}
\end{equation}
which is equivalent to
\begin{equation}
     \sum_{t=1}^T\left(\frac{1}{\min_{p}\pi^t(p)}+1\right)^2 = o\left(T^2/\log(T)\right).
\end{equation}
\paragraph{Elaboration of (\ref{eqn:bc-summation-cond1}) implied by (\ref{eqn:bc-summation-cond2})}

In fact, by definition
\begin{equation}
    D(T) = \omega(\log T) \Leftrightarrow
    \forall c>0, \exists T_0,\mbox{ s.t. } \forall T>T_0, c\cdot D(T) > \log T.
\end{equation}
Take $c= \delta^2/4k^2$, we have for some $T_0$, for all $T>T_0$,
\begin{equation}
    \frac{\delta^2}{2k^2}\cdot D(T) > 2 \log T,
\end{equation}
which implies that for all $T>T_0$,
\begin{equation}
    k^2\exp\left(-\frac{\delta^2}{2k^2}\cdot D(T)\right) < \frac{1}{T^2},
\end{equation}
which by definition implies (\ref{eqn:bc-summation-cond1}).

\subsection{Lemma \ref{lem:lower-consistency}}
Given a sequence $\{\delta^T\}_T$, let $A_T$ be the event that 
$\EstEr{\sigma}{\estplcost} - \Er{\sigma}{\plcost} > - \delta^T$. To show that
\begin{equation}
\Pr\left[\liminf_{s \to \infty} A_T\right] = \Pr\left[\EstEr{\sigma}{\estplcost} -\Er{\sigma}{\plcost} > - \delta^T (e.v.)\right] = 1,
\end{equation} we follow the same steps as the proof of Lemma (\ref{lem:upper-consistency}) and conclude that it suffices to have
\begin{equation}
    \sum_{T=1}^\infty \Pr\left[\EstEr{\sigma}{\estplcost} - \Er{\sigma}{\plcost} \leq - \delta^T \right] = \sum _{T=1}^{\infty}2k^2\exp\left(-\frac{(\delta^T)^2}{2k^2\sum_{t=1}^T d_t^2}\right) < \infty.
\end{equation}
Plugging in the definition of $\delta^T$, the summand becomes $\alpha^T$. Hence, it suffices to have $\alpha^T = O(T^{-2})$.
\subsection{Theorem \ref{thm:estimated-reg-consistency}}
Note that when 
\begin{equation}
     \sum_{t=1}^T\left(\frac{1}{\min_{p}\pi^t(p)}+1\right)^2 = o\left(T^2/\log(T)\right),
\end{equation}
and $\alpha^T = \Theta(T^{-2})$, we have
$\delta^T = o(1)$. Combining with the Lemma (\ref{lem:upper-consistency}) we obtain the statement of the first bullet point.

The second bullet point directly comes from Lemma (\ref{lem:lower-consistency}).

\end{document}